\pdfoutput=1

\documentclass{article}
\usepackage{array}
\usepackage[utf8]{inputenc}
\usepackage[table]{xcolor}
\usepackage{amssymb}
\usepackage{authblk}
\usepackage{setspace}
\usepackage[margin=1.25in]{geometry}
\usepackage{graphicx}
\graphicspath{ {./figures/} }
\usepackage{subcaption}
\usepackage{amsmath}
\usepackage{lineno}
\usepackage{caption}
\usepackage{hyperref}
\usepackage{tabularx}
\usepackage{booktabs}
\usepackage{rotating}
\usepackage[most]{tcolorbox}
\usepackage{tikz}
\usepackage{amsmath}
\usepackage{amsthm}
\usepackage{algorithm}
\usepackage{algpseudocode}
\usepackage{soul}
\usepackage{bm}
\usepackage{multirow}
\usepackage{enumitem}
\usetikzlibrary{arrows.meta,positioning,calc}

\algnewcommand\algorithmicinput{\textbf{Input:}}
\algnewcommand\algorithmicoutput{\textbf{Output:}}
\algnewcommand\Input{\item[\algorithmicinput]}
\algnewcommand\Output{\item[\algorithmicoutput]}

\theoremstyle{definition}
\newtheorem{definition}{Definition}[section]

\theoremstyle{plain}
\newtheorem{proposition}[definition]{Proposition}

\theoremstyle{remark}
\newtheorem{remark}[definition]{Remark}

\newcommand{\Agents}{\mathcal{A}}          
\newcommand{\Claims}{\mathcal{C}}          
\newcommand{\Evid}{\mathcal{E}}            
\newcommand{\Args}{\mathcal{G}}            
\newcommand{\Mem}{\mathcal{M}}

\newcommand{\Trace}{\mathcal{T}}           
\newcommand{\Just}{\mathcal{J}}            
         
\newcommand{\Media}{\mathcal{V}}           
\newcommand{\Labels}{\mathcal{Y}}          
\newcommand{\Objs}{\mathcal{Z}}            
\newcommand{\Urls}{\mathcal{W}}

\newcommand{\Learn}{\mathcal{D}_{\mathrm{learn}}}
 
\newcommand{\Ind}{\operatorname{Ind}}      
\newcommand{\Ext}{\operatorname{Ext}}      
\newcommand{\IDs}{\operatorname{IDs}}
\newcommand{\clip}{\operatorname{clip}}
\newcommand{\score}{\psi}                  
\newcommand{\infl}{g}                      
\newcommand{\cand}{\kappa}                 
\newcommand{\unc}{\omega}                  
\newcommand{\ctx}{\zeta}                   
\newcommand{\cfloor}[1]{\underline{\gamma}\_{#1}} 
\newcommand{\prov}{\mathrm{prov}}          
\newcommand{\spec}{\ell}                   
\newcommand{\dvs}{d}                       
\newcommand{\cratio}{\operatorname{cr}}        
\newcommand{\stC}{\mathsf{C}}\newcommand{\stR}{\mathsf{R}}
\newcommand{\stV}{\mathsf{V}}\newcommand{\stA}{\mathsf{A}}
\newcommand{\stQ}{\mathsf{Q}}\newcommand{\stF}{\mathsf{F}}
\newcommand{\stO}{\mathsf{O}}
\newcommand{\best}[1]{\textbf{#1}}
\newcommand{\secondbest}[1]{\underline{#1}}
\newcommand{\NA}{--}

\providecommand{\clip}{\operatorname{clip}}
\providecommand{\IDs}{\operatorname{IDs}}

\definecolor{SEMVNavy}{HTML}{183B56}
\definecolor{SEMVTeal}{HTML}{17807A}
\definecolor{SEMVGold}{HTML}{D89B32}
\definecolor{DraftRed}{HTML}{A33A31}
\definecolor{PaleGold}{HTML}{FFF3CC}
\definecolor{PaleBlue}{HTML}{EAF2F8}
\definecolor{PaleGreen}{HTML}{E7F3EF}
\definecolor{PalePurple}{HTML}{4D61DC}
\definecolor{PalePurpleRed}{HTML}{8F689E}

\providecommand{\changeblock}[4]{
  \par\smallskip\noindent
  \begingroup
  \setlength{\fboxsep}{3pt}
  \colorbox{#1}{
    \parbox{\dimexpr\linewidth-2\fboxsep\relax}{
      \raggedright
      \textcolor{#2}{\scriptsize\sffamily\bfseries[#3]}\enspace
      #4\strut
    }
  }
  \endgroup
  \par\smallskip
}

\providecommand{\retrievalblock}[1]{
  \changeblock{PaleBlue}{SEMVNavy}{C}{#1}}

\providecommand{\contestblock}[1]{
  \changeblock{DraftRed!6}{DraftRed}{E}{#1}}

\providecommand{\memoryblock}[1]{
  \changeblock{PaleGreen}{SEMVTeal}{F$\rightarrow$C}{#1}}

\newcommand{\mediaframe}[2]{
\begin{minipage}[t]{0.28\linewidth}
    \centering
    \includegraphics[
        width=\linewidth,
        height=2.75cm,
        keepaspectratio
    ]{#1}\\[-1pt]
    {\scriptsize\sffamily $t=#2$}
\end{minipage}
}

\newcommand{\keyframegap}{
\begin{minipage}[t]{0.02\linewidth}
    \centering
    \raisebox{1.25cm}{\large\(\cdots\)}
\end{minipage}
}

\newtcolorbox{pairedreport}[2][]{
  enhanced,
  breakable,
  colback=#2!3,
  colframe=#2,
  coltitle=white,
  boxrule=0.7pt,
  arc=2pt,
  left=7pt,
  right=7pt,
  top=6pt,
  bottom=6pt,
  fonttitle=\bfseries,
  fontupper=\footnotesize,
  title={#1}
}

\newtcolorbox{humancontestbox}{
  enhanced,
  breakable,
  colback=DraftRed!6,
  colframe=DraftRed,
  boxrule=0.7pt,
  arc=2pt,
  left=6pt,
  right=6pt,
  top=5pt,
  bottom=5pt,
  fontupper=\small,
  title={Human Contestation with Arguments},
  coltitle=white,
  fonttitle=\bfseries,
  fontupper=\footnotesize,
}

\newcommand{\haction}[1]{
  \textcolor{DraftRed}{\scriptsize\sffamily\bfseries[#1]}\enspace
}

\tcbuselibrary{theorems,breakable,skins}

\newtcbtheorem[number within=section]
  {selfevodefinition}
  {Definition}
  {
    enhanced,
    breakable,
    colback=blue!3,
    colframe=blue!45!black,
    boxrule=0.3pt,
    arc=2pt,
    left=6pt, right=6pt, top=5pt, bottom=5pt,
    fonttitle=\small\bfseries,
    coltitle=black,
    colbacktitle=blue!10,
    separator sign none,
    description font={\par\normalfont\itshape\small},
  }
  {def}
\usepackage[style=nejm, 
citestyle=numeric-comp,
sorting=none]{biblatex}
\title{Self-Evolving Multimedia Verification through\\Memory Consolidation of Contestation Experiences}

\author[1*]{Truong Thanh Hung Nguyen}
\author[2]{Vo Thanh Khang Nguyen}
\author[1]{Hoang-Loc Cao}
\author[1]{\protect\\Phuc Ho}
\author[3]{Truong Thinh Nguyen}
\author[1]{Van Pham}
\author[1*]{Hung Cao}

\affil[1]{University of New Brunswick, Fredericton, New Brunswick, Canada}
\affil[2]{FPT Software, Quy Nhon, Vietnam}
\affil[3]{University of Science and Technology of Hanoi, Hanoi, Vietnam}

\affil[*]{Address correspondence to: \{hung.ntt,hung.cao\}@unb.ca}

\date{}

\begin{document}

\maketitle

\begin{abstract}
Multimedia verification requires not only accurate decisions but also traceable evidence, reliable human correction, and safe reuse of prior experience. Existing systems often lack explicit mechanisms for revising intermediate reasoning or preventing harmful knowledge transfer. We present SEMV (\textit{Self-Evolving Multimedia Verification}), a self-evolving multi-agent framework that treats provenance-bearing arguments as the interface between evidence, reasoning, human contestation, and memory. SEMV combines arena-based quantitative bipolar argumentation (A-QBAF), causal and scoped revision, and verification-gated memory consolidation with explicit conflict retention. On COSMOS benchmark, SEMV achieves 91.88\% accuracy versus 89.10\% for the strongest comparable baseline. Verified memory reduces negative transfer from 5.7\% to 0.2\%. On CTR benchmark, constructed from reviewer contestations, scoped causal revision corrects 96.7\% of initial errors while saving 52.8\% compute. MV2026 Grand Challenge dataset further supports evidence-grounded, temporally consistent reporting. These results show that SEMV can evolve through verified experience while keeping accumulated knowledge and subsequent decisions traceable, revisable, and contestable.

\end{abstract}

\section{Introduction}

Multimedia verification determines not only whether images or videos are authentic, but also whether their accompanying narratives correctly describe what happened, where and when it occurred, and who was involved. Authentic media can therefore be misleading when reused in a false context, producing \textit{out-of-context} (OOC) cheapfakes \cite{aneja2023cosmos,dang20262026}. COSMOS \cite{aneja2023cosmos} studies this problem through image-caption consistency, while the Multimedia Verification Grand Challenges require evidence-rich investigation of provenance, time, location, entities, motivation, and forensic authenticity \cite{dang20262026}.

Current systems remain limited in how their verification state can be inspected, corrected, and reused. Classifiers compress heterogeneous evidence into a final label, while tool-augmented language models may generate detailed reports without explicit provenance and conflict constraints. Human feedback is also commonly applied only after inference, making it difficult to identify whether an error originates from decomposition, retrieval, validation, reasoning, or aggregation. These limitations are especially consequential in open-world verification, where evidence availability changes and correct decisions may depend on incomplete or conflicting sources.

We introduce SEMV (\textit{Self-Evolving Multimedia Verification})\footnote{
Our implementation is publicly available at
\url{https://github.com/Analytics-Everywhere-Lab/SEMV}}, a multi-agent framework in which the \emph{argument} serves as the interface between perception, external evidence, reasoning, and human review. Evidence-grounded support and attack arguments are associated with explicit subclaims, provenance, verification status, and decomposable strengths, then resolved using an arena-based quantitative bipolar argumentation framework (A-QBAF) \cite{cao2026neuro,baroni2019fine}. The resulting argumentative state directly governs subclaim decisions, uncertainty, contestation, and subsequent memory updates rather than serving only as a post-hoc explanation.

SEMV further introduces \emph{causal contestability} and \emph{verified self-evolution}. Human feedback is routed to the earliest affected pipeline stage, after which only its downstream dependency cone is recomputed while independent results are preserved \cite{lyons2021conceptualising,alfrink2023contestable,leofante2024contestable,nguyen2026heart2mind}. Experiences generated through verification and contestation are staged rather than immediately reused, then checked for provenance, support, duplication, and conflict before consolidation. This verification-gated policy reduces the risk that agents indiscriminately follow experiences outside their applicable context \cite{xiong2026memory}.

Building on our previous multi-agent verification framework \cite{le2025multimedia} and arena-based reasoning approach \cite{nguyen2026contestable}, this paper makes the following contributions:

\begin{enumerate}
\item We formalize SEMV as a \textit{cooperative multi-agent verification framework} with structured case state, typed agent communication, provenance-constrained arguments, A-QBAF propagation, uncertainty estimation, and deterministic decision aggregation (Sec.~\ref{sec:problem}--\ref{sec:method}).

\item We introduce \textit{adaptive causal contestation}, which routes interventions to affected pipeline stages, performs dependency-aware localized recomputation, preserves traces, and quantifies intervention locality (Sec.~\ref{subsec:contestation}).

\item We develop \textit{verification-gated self-evolution} through staged experience verification, multi-case consolidation, conflict retention, frozen evaluation snapshots, and metrics that distinguish beneficial transfer from negative transfer (Sec.~\ref{subsec:self_evolution}).

\item We establish a comprehensive evaluation of predictive performance, statistical significance, report quality, calibration, contestability, memory-transfer safety, and computational cost across Gemma4-31B \cite{gemma42026}, Nemotron-3-Nano-Omni-30B \cite{deshmukh2026nemotron}, Qwen3.6-35B \cite{qwen36card}, and InternVL3.5-38B \cite{wang2025internvl3_5}, together with qualitative trace analysis (Sec.~\ref{sec:experiments}--\ref{sec:results}).
\end{enumerate}

\section{Related Work}
\label{sec:related}

\subsection{Multimedia Verification}

Multimedia verification asks whether media is authentic and whether it is presented in its original factual context. \textit{Cheapfakes} include low-cost manipulations such as cropping, splicing, speed changes, selective editing, and re-staging, as well as misleading use of unaltered media \cite{paris2019deepfakes}. OOC misuse is a prominent case in which authentic media is falsely associated with an event, location, time, person, or motivation \cite{aneja2023cosmos,abdelnabi2022open}. Verification has therefore expanded from image-text consistency classification toward evidence-grounded investigation.

Existing methods broadly follow three directions. Representation and feature-fusion approaches combine cross-modal semantics, caption generation, natural-language or visual entailment, and auxiliary linguistic signals \cite{la2022multimodal,la2022leverage,tran2022textual,alkaddour2022sentiment}. External-evidence methods retrieve and assess web evidence using cycle consistency, relevance-aware reranking, or retrieval--stance--explanation pipelines \cite{abdelnabi2022open,papadopoulos2025red,yao2023mocheg}. More recent LLM/VLM systems enrich contextual reasoning through prompted relation extraction, generated visual context, multimodal comparison, source and metadata retrieval, and dynamic tool-assisted investigation \cite{wu2023cheap,vo2024detecting,seo2024multi,nguyen2025robust,braun2025defame}.

Recent multimedia verification challenges further emphasize case-level investigation and structured reporting. Multimedia Verification Grand Challenges \cite{dang20262026} introduced source discovery, manipulation analysis, and detailed reports, motivating systems that combine forensic, semantic, temporal, geospatial, OSINT, retrieval, and multimodal reasoning \cite{phan2025fact,pham2025aegis,le2025multimedia}. MV2026 formalizes these requirements through a professional rubric covering provenance, evidence URLs, location, time, entities, motivation, forensic authenticity, and report quality \cite{dang20262026}. Recent systems accordingly explore multilingual evidence-based reasoning \cite{nguyen2026deepverify} and multi-agent verification \cite{muneer2026mosaiv}. Our preliminary A-QBAF framework instead represents retrieved evidence as provenance-bearing support and attack arguments in local quantitative bipolar argument graphs \cite{nguyen2026contestable}. However, it does not address \textit{selective propagation of corrections} or \textit{safe reuse of verified experience}, which motivate our proposed SEMV. Table~\ref{tab:mv_system_taxonomy} summarizes these distinctions.

\begin{sidewaystable}[p]
\centering
\caption{Capability comparison of evidence-grounded multimedia verification approaches. $\checkmark$ denotes an explicit mechanism, $\triangle$ denotes partial, implicit, or terminal-stage support, and -- denotes that the capability is absent or not documented. The comparison concerns reported system mechanisms, not capabilities that a general-purpose backbone might exhibit without explicit control.}
\label{tab:mv_system_taxonomy}

\setlength{\tabcolsep}{3pt}
\renewcommand{\arraystretch}{1.15}

\begin{tabularx}{\linewidth}{
    p{3.3cm}
    *{12}{>{\centering\arraybackslash}p{.7cm}}
    X}
\toprule
\textbf{Approach} &
\textbf{MM} & \textbf{OS} & \textbf{PV} & \textbf{EV} &
\textbf{FA} & \textbf{GT} & \textbf{CD} & \textbf{MA} &
\textbf{AG} & \textbf{HI} & \textbf{RG} & \textbf{VM} &
\textbf{Verification Output} \\
\midrule
La \textit{et al.}\ \cite{la2022multimodal,la2022leverage} &
$\checkmark$ & -- & -- & $\triangle$ & -- & -- & -- & -- & -- & -- & -- & -- &
Cross-modal visual- caption-based verdict \\

Tran \textit{et al.}\ \cite{tran2022textual} &
$\checkmark$ & $\checkmark$ & $\triangle$ & $\checkmark$ & -- & -- & -- & -- & -- & -- & -- & -- &
Unsupervised ensemble verdict \\

Abdelnabi et al.\ \cite{abdelnabi2022open} &
$\checkmark$ & $\checkmark$ & $\triangle$ & $\checkmark$ &
-- & -- & -- & -- & -- & -- & -- & -- &
Evidence-conditioned pair verdict \\

MOCHEG \cite{yao2023mocheg} &
$\checkmark$ & $\checkmark$ & $\checkmark$ & $\checkmark$ &
-- & -- & -- & -- & -- & -- & $\triangle$ & -- &
Verdict and generated explanation \\

RED-DOT \cite{papadopoulos2025red} &
$\checkmark$ & $\checkmark$ & $\triangle$ & $\checkmark$ &
-- & -- & -- & -- & -- & -- & -- & -- &
Evidence-reranked pair verdict \\

DEFAME \cite{braun2025defame} &
$\checkmark$ & $\checkmark$ & $\checkmark$ & $\checkmark$ &
$\triangle$ & $\triangle$ & $\triangle$ & -- & -- & -- &
$\checkmark$ & -- &
Structured multimodal fact-check \\

Phan \textit{et al.}\ \cite{phan2025fact} &
$\checkmark$ & $\checkmark$ & $\checkmark$ & $\checkmark$ &
$\checkmark$ & $\checkmark$ & -- & $\triangle$ & -- & -- &
$\checkmark$ & -- &
Expert report and public summary \\

\textsc{Aegis} \cite{pham2025aegis} &
$\checkmark$ & $\checkmark$ & $\checkmark$ & $\checkmark$ &
$\checkmark$ & $\checkmark$ & -- & -- & -- & -- &
$\checkmark$ & -- &
Structured OSINT verification report \\

Le \textit{et al.}\ \cite{le2025multimedia} &
$\checkmark$ & $\checkmark$ & $\checkmark$ & $\checkmark$ &
$\triangle$ & $\checkmark$ & $\triangle$ & $\checkmark$ &
-- & -- & $\checkmark$ & -- &
Six-stage verification report \\

DeepVerify \cite{nguyen2026deepverify} &
$\checkmark$ & $\checkmark$ & $\checkmark$ & $\checkmark$ &
$\checkmark$ & $\triangle$ & -- & -- & -- & -- &
$\checkmark$ & -- &
Evidence-based explainable report \\

MOSAIV \cite{muneer2026mosaiv} &
$\checkmark$ & $\checkmark$ & $\triangle$ & $\triangle$ &
$\checkmark$ & $\checkmark$ & $\triangle$ & $\checkmark$ &
-- & -- & $\checkmark$ & -- &
Agent-swarm verification report \\

A-QBAF \cite{nguyen2026contestable} &
$\checkmark$ & $\checkmark$ & $\checkmark$ & $\checkmark$ &
$\triangle$ & $\checkmark$ & $\checkmark$ & $\checkmark$ &
$\checkmark$ & $\checkmark$ & $\checkmark$ & -- &
Contestable argument-grounded report \\

\rowcolor{PalePurple!6} \textbf{SEMV (Ours)} &
$\checkmark$ & $\checkmark$ & $\checkmark$ & $\checkmark$ &
$\checkmark$ & $\checkmark$ & $\checkmark$ & $\checkmark$ &
$\checkmark$ & $\checkmark$ & $\checkmark$ & $\checkmark$ &
Contestable self-evolving case report \\

\bottomrule
\end{tabularx}

\vspace{4pt}

\begin{minipage}{\linewidth}
\small
\textbf{Legend:}
MM = multimodal/VLM reasoning;
OS = open-Web or OSINT retrieval;
PV = explicit source provenance;
EV = evidence relevance, sufficiency, or stance validation;
FA = media forensic authenticity;
GT = geospatial and temporal reasoning;
CD = explicit claim decomposition;
MA = multi-agent orchestration;
AG = formal support--attack argumentation;
HI = human intervention or contestation;
RG = structured report generation;
VM = verified cross-case self-evolving memory.
For HI, $\triangle$ denotes terminal review without localized causal
revision, whereas $\checkmark$ denotes intervention over intermediate
verification objects.
\end{minipage}

\end{sidewaystable}

\subsection{From Explainability to Contestability}
\textit{Explainability} exposes the basis of a decision, whereas \textit{contestability} enables users to act on that basis by identifying disputed components, justifying challenges, and obtaining revised decisions while preserving the original trace, thereby supporting interaction, redress, and accountability throughout the decision lifecycle. Prior research \cite{lyons2021conceptualising,alfrink2023contestable,nguyen2026heart2mind} frames contestable AI as a dynamic human-machine process in which explanations are examined, objections are evaluated, and successful challenges trigger revision.

Computational argumentation operationalizes this process through explicit and editable reasoning objects. Abstract frameworks encode attacks among arguments, bipolar frameworks add support relations, and quantitative variants propagate numerical argument strengths toward a conclusion under a \textit{gradual semantics} \cite{baroni2019fine,rago2016discontinuity,rago2023interactive,cao2026neuro}.
Unlike free-form rationales, argument graphs expose which propositions support or attack a claim, how strongly they contribute, and how an intervention changes the inference. Argumentative language models have applied this structure to claim verification \cite{freedman_argumentative_2025}, while a growing family of systems now develops quantitative bipolar frameworks over retrieved evidence \cite{pmlr-v284-zhu25a}, globally contestable decision-support systems \cite{argeval2026}, and methods for adjusting argument or relation strengths to produce desired outcomes \cite{yin2026contestability,kampik2026strength}. SEMV differs from this line in the \emph{granularity} of the intervention: prior work contests the argument graph, whereas SEMV routes a contestation to the earliest \emph{pipeline stage} that can repair it, which may be claim decomposition, retrieval, or validation rather than the graph itself.

\subsection{Self-Evolution through Memory Consolidation}

Whereas contestability repairs the current case, memory-based adaptation transfers experience to future decisions. Early methods such as Reflexion, Agent-Pro, and SAGE use feedback and self-reflection to store episodic lessons, revise behavioral policies, and retain useful information across interactions \cite{shinn2023reflexion,zhang2024agentpro,liang2025sage}. Later approaches move toward cross-episode abstraction: ExpeL extracts reusable insights and examples \cite{zhao2024expel}, MetaReflection converts failed trials into semantic instructions \cite{gupta2024metareflection}, and Contextual Experience Replay synthesizes trajectory summaries in a dynamic buffer \cite{liu2025contextual}.

Recent work further studies memory organization and consolidation. A-MEM builds dynamically linked records \cite{xu2026amem}, RecMem consolidates only after semantically similar experiences recur \cite{dai2026recmem}, and ICAL converts inefficient visual trajectories and human feedback into corrected examples \cite{sarch2024vlm}. Beyond memory itself, agent symbolic learning optimizes prompts, tools, and workflows through language-based analogues of backpropagation \cite{ou2025symbolic}.
These systems also expose risks that motivate our verification-gated design. Repeated LLM rewriting can cause semantic drift \cite{zhang2026useful}, agents may follow retrieved experiences despite poor applicability \cite{xiong2026memory}, and locally valid but non-transferable memories can induce negative transfer. Accordingly, memory evaluation should separate immediate correction from downstream transfer and measure both benefit and harm. Long-term memory should additionally preserve provenance, scope, conflicts, and reuse outcomes \cite{wu2025longmemeval}.

We define self-evolution as a closed loop in which contestation and outcome-grounded experience produce candidate knowledge that is verified, consolidated into external memory, and reused in future decisions, while model parameters and agent structure remain fixed (Fig.~\ref{fig:self_evolving_memory_consolidation}). Accordingly, we adopt the following operational definition:

\begin{selfevodefinition}
{Self-Evolution through Memory Consolidation with Contestation Experiences}{selfevo}
A \textit{self-evolving agent through memory consolidation with contestation experiences} persistently adapts its future decisions through a closed-loop, non-parametric process in which autonomous outcomes and human contestations produce provenance-preserving experience, candidate lessons are verified and consolidated across cases into external memory, and relevant knowledge is selectively reused and attributed in subsequent decisions without modifying the underlying model parameters.
\end{selfevodefinition}

\begin{figure}[t]
    \centering
    \includegraphics[width=.9\linewidth]{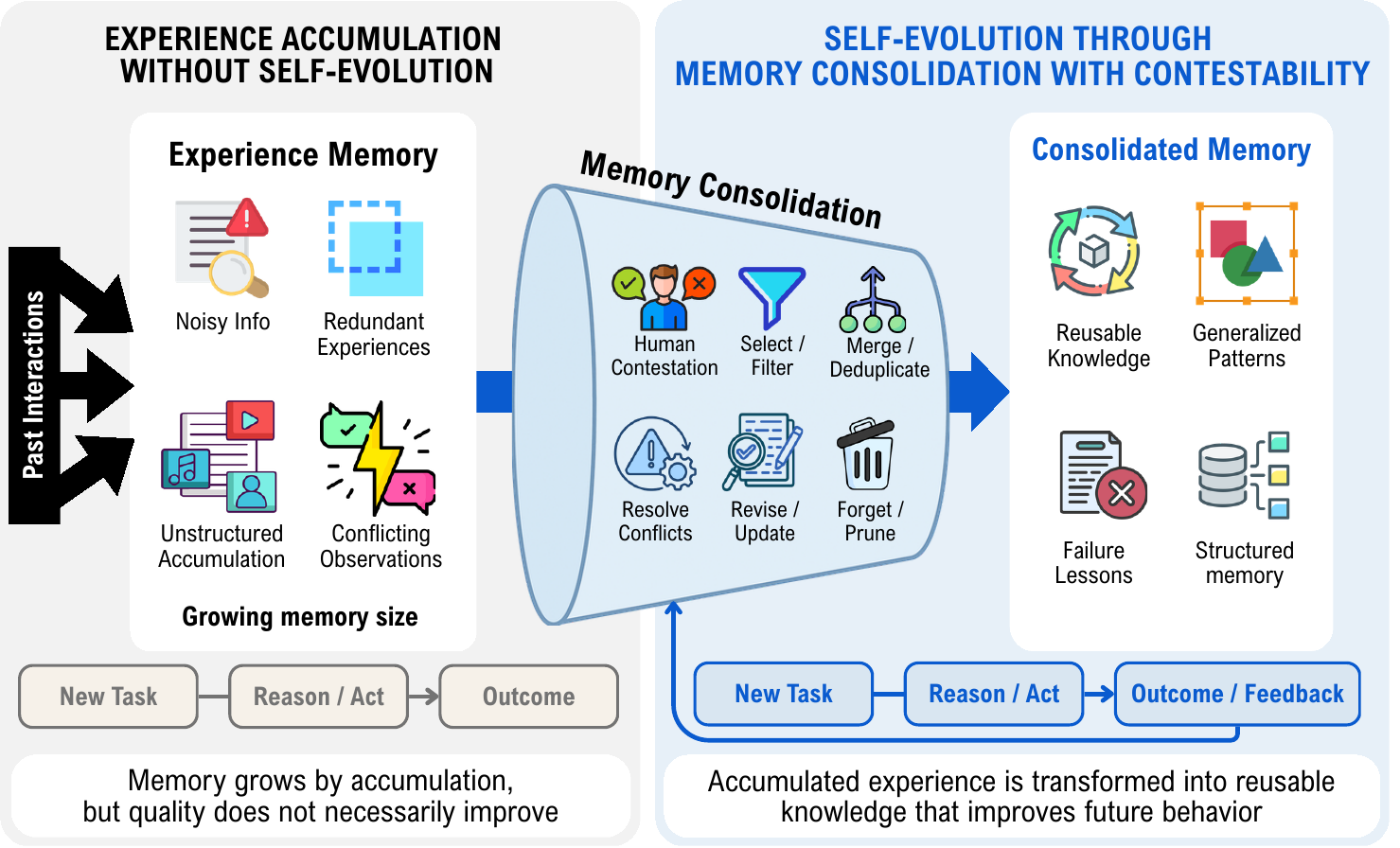}
    \caption{Conceptual design of \textit{Self-Evolution through Memory Consolidation with Contestation Experiences} framework. Without consolidation, past interactions accumulate as potentially noisy, redundant, unstructured, or conflicting experiences. With consolidation, human contestation, and memory operations such as filtering, deduplication, conflict resolution, revision, and pruning, accumulated experience is transformed into structured reusable knowledge, generalized patterns, and failure lessons for future tasks.}
\label{fig:self_evolving_memory_consolidation}
\end{figure}

\textit{Memory consolidation} goes beyond accumulation by filtering insufficiently grounded records, merging duplicates, aggregating independent support, preserving unresolved conflicts, abstracting recurrent patterns, and tracking whether retrieved knowledge improves or harms later decisions. The lifecycle comprises:

\begin{figure}[h]
\begin{tikzpicture}[
    node distance=4mm and 4mm,
    stage/.style={
        rounded corners=2pt,
        draw=blue!50!black,
        fill=blue!4,
        align=center,
        font=\scriptsize,
        text width=2.9cm,
        minimum height=0.82cm,
        inner sep=2.5pt
    },
    humanstage/.style={
        stage,
        draw=orange!65!black,
        fill=orange!7
    },
    memorystage/.style={
        stage,
        draw=green!45!black,
        fill=green!5
    },
    flow/.style={
        -{Latex[length=1.6mm]},
        thick,
        draw=black!65
    },
    feedback/.style={
        -{Latex[length=1.6mm]},
        thick,
        dashed,
        draw=green!45!black
    }
]

\node[stage] (capture) {\textbf{1. Experience\\Capture}};
\node[humanstage, right=of capture] (contest) {\textbf{2. Human\\Contestation}};
\node[stage, right=of contest] (reflect) {\textbf{3. Post-Prediction\\Reflection}};
\node[stage, right=of reflect] (verify) {\textbf{4. Verification\\and Staging}};

\node[memorystage, below=8mm of capture] (retrieve) {\textbf{7. Reuse and Transfer}};
\node[memorystage, right=of retrieve] (promote) {\textbf{6. Long-Term\\Promotion}};
\node[memorystage, right=of promote] (consolidate) {\textbf{5. Multi-Experience\\Consolidation}};

\draw[flow] (capture) -- (contest);
\draw[flow] (contest) -- (reflect);
\draw[flow] (reflect) -- (verify);
\draw[flow] (verify.south) |- (consolidate.east);
\draw[flow] (consolidate) -- (promote);
\draw[flow] (promote) -- (retrieve);

\draw[feedback]
    (retrieve.west) -- ++(-0.45,0)
    |- node[pos=0.22,left,font=\scriptsize,align=center] {Future\\cases}
    (capture.west);

\end{tikzpicture}
\end{figure}

\begin{enumerate}[
    label=\textbf{\arabic*.},
    leftmargin=*,
    itemsep=2pt,
    topsep=3pt
]
    \item \textbf{Experience Capture:} record trajectories, outcomes, evidence, provenance, and human or environmental feedback.

    \item \textbf{Human Contestation:}
    allow a human reviewer to challenge and revise the system's reasoning, evidence, or decision through supported contestation actions, while preserving an explicit record of what was contested and how the case state changed.

    \item \textbf{Post-Prediction Reflection:} derive case-specific observations, failure explanations, and candidate patterns that may generalize beyond the current experience.
    
    \item \textbf{Verification and Staging:} assess the grounding, correctness, novelty, and applicable scope of each candidate before retaining it in short-term memory.
    
    \item \textbf{Multi-Experience Consolidation:} group related candidates, merge duplicates, aggregate independent support, preserve conflicting observations, and abstract recurrent knowledge.
    
    \item \textbf{Long-Term Promotion:} promote sufficiently supported and stable knowledge while isolating uncertain or conflicting records for further evidence or review.
    
    \item \textbf{Reuse and Transfer:} retrieve relevant consolidated knowledge for later cases and record where it influences planning, evidence acquisition, argument construction, validation,
and subsequent outcomes.
\end{enumerate}

For multimedia verification, this lifecycle requires a strict epistemic boundary between evidence about the current case and experience derived from previous cases. SEMV enforces this boundary formally in Def.~\ref{def:admissibility} and instantiates the lifecycle in Sec.~\ref{subsec:self_evolution}.
\section{Preliminaries}\label{sec:problem}

\subsection{Problem Definition}
\paragraph{Verification Instance and Output}
Let a multimedia verification instance be:
\begin{equation}
  x=\left\langle
  \Media_x,c_0,\ctx,\Evid^{(0)},\delta \right\rangle,
  \label{eq:verification_instance}
\end{equation}
where $\mathcal{V}_x=\{m_j\}_{j=1}^{n_x}$ contains one or more images or videos. $c_0$ denotes the main textual claim, $\ctx$ contains optional contextual metadata, $\Evid^{(0)}$ is optional user- or dataset-supplied evidence, and $\delta$ identifies the dataset adapter and its output contract.

The system returns:
\begin{equation}
  F(x)=
  \left\langle
  \hat y,\mathbf{p},\gamma,\Just,\Trace,\Xi
  \right\rangle,
  \label{eq:verification_output}
\end{equation}
where $\mathbf{p}\in\Delta(\mathcal{Y}_{\delta})$ is the predicted probability distribution over the dataset-compatible label space and $\hat y=\arg\max_{y} \mathbf{p}(y)$ is the predicted label. The value $\gamma\in[0,1]$ is the confidence assigned to $\hat y$ and is used for calibration, selective prediction, and escalation, which is computed by
Eq.~\ref{eq:final_confidence}. The justification $\mathcal{J}$ is human- and machine-readable, $\mathcal{T}$ is an auditable execution trace, and $\Xi$ contains case-level uncertainty and escalation codes.

For COSMOS, let:
\begin{equation}
\mathcal{Y}_{\mathrm{COSMOS}}=\{\mathsf{NOOC},\mathsf{OOC}\},
\qquad
p_i=\mathbf{p}_i(\mathsf{OOC}),
\qquad
b_i=
\begin{cases}
1, & y_i=\mathsf{OOC},\\
0, & y_i=\mathsf{NOOC},
\end{cases}
\label{eq:cosmos_labels}
\end{equation}
where $p_i$ is the predicted probability of the OOC class and $b_i\in\{0,1\}$ is the corresponding binary ground-truth indicator. 
The mapping from the internal argumentative score to $p_i$ is given in Eq.~\ref{eq:prob_map}

For MV2026, $\mathcal{Y}_{\mathrm{MV}}$ contains the adapter-defined labels for verified, mostly verified, partially verified, false-context, manipulated or synthetic, and insufficient-evidence outcomes. Because uncertainty is an abstention outcome rather than an official task label, it is represented by $\Xi$ and is not included in $\mathcal{Y}_{\mathrm{MV}}$.

\paragraph{Case Evidence and Arguments}
After input canonicalization, the supplied evidence $\Evid^{(0)}$ is incorporated into the initial current-case evidence set $\Evid_0$. Subsequent retrieval and analysis produce the evolving evidence set $\Evid_t$.

An evidence item $e$ records its identifier, source, observed content, reliability $r_e$, relevance $q_e$, media or temporal location $\lambda_e$, provenance $\rho_e$, and uncertainty codes $\unc_e$. An argument $a$ records its identifier, associated subclaim, stance $\sigma_a$, text $t_a$, cited evidence identifiers $E(a)$, cited memory identifiers $M(a)$, grounding-verifier outcome $v_a$, strength $w_a$, and uncertainty codes $\unc_a$:
\begin{equation}
\begin{aligned}
e&=\left\langle \mathrm{id}_e,\mathrm{src}_e,\mathrm{obs}_e,r_e,q_e,
\lambda_e,\rho_e,\unc_e\right\rangle,\\
a&=\left\langle \mathrm{id}_a,k,\sigma_a,t_a,E(a),M(a),v_a,w_a,\unc_a\right\rangle,
\end{aligned}
\label{eq:case_objects}
\end{equation}
where $k$ indexes a scoped subclaim,
$v_a\in\{0,1\}$, $w_a\in[0,1]$, and $\sigma_a\in\{\mathsf{sup},\mathsf{att},\mathsf{mix},\mathsf{neu}\}$, corresponding to \textit{support},\textit{ attack}, \textit{mixed}, and \textit{neutral} positions. The output-level codes $\Xi$ aggregate unresolved or decision-relevant codes from the local sets $\unc_e$ and $\unc_a$.

Because $E(a)$ stores identifiers rather than evidence objects, define $\IDs(\Evid_t)=\{\mathrm{id}_e:e\in\Evid_t\}$ and let the evidence resolved for argument $a$ at state $s_t$ be:
\begin{equation}
  \Evid_t(a)=\left\{e\in\Evid_t:\mathrm{id}_e\in E(a)\right\}.
  \label{eq:resolved_argument_evidence}
\end{equation}

\begin{definition}[Epistemic admissibility]
\label{def:admissibility}
At state $s_t$, an argument $a$ is epistemically admissible if and only if it cites at least one current-case evidence item, every cited identifier resolves in the current evidence set, every resolved item has provenance, and the grounding verifier
accepts the evidence--argument relation:
\begin{equation}
\begin{split}
\operatorname{Adm}_t(a)=1 \iff {}
E(a)\neq\varnothing
\ \wedge\
E(a)\subseteq\operatorname{IDs}(\mathcal{E}_t){}\wedge\
\bigwedge_{e\in\mathcal{E}_t(a)}
[\rho_e\neq\bot]
\ \wedge\
[v_a=1].
\end{split}
\label{eq:admissibility}
\end{equation}
Retrieved memory may guide planning and argument construction, but $M(a)$ cannot replace $E(a)$ in Eq.~\ref{eq:admissibility}.
\end{definition}

This requirement links every admissible argument to current-case evidence, which is what makes provenance validation, human contestation, and execution replay possible. Arguments that fail are not discarded. They remain visible in the trace and are marked inadmissible as verified evidence. If retained for reasoning, they contribute only with the grounding-aware discounted weighting defined in Eq.~\ref{eq:argument_strength}, preserving diagnostic information without assigning them verified status.

\subsection{Multi-Agent Transition System}
We model SEMV as a cooperative, partially observable multi-agent transition system:
\begin{equation}
\mathfrak{S}=\left\langle
  \mathcal{A},\mathcal{S},\{\mathcal{U}_i\}_{i\in\mathcal{A}},
  \mathcal{L},\{\Omega_i,\pi_i\}_{i\in\mathcal{A}},
  \mathcal{P},T,\mathcal{I}
\right\rangle.
\label{eq:mas}
\end{equation}
where $\Agents=\{A_1,\ldots,A_n\}$ is the set of functional agents, $\mathcal{S}$ is the shared case-state space, $\mathcal{U}_i$ is the action set of agent $i$, and $\mathcal{L}$ is the set of structured inter-agent messages. The local observation function $\Omega_i:\mathcal{S}\rightarrow\mathcal{O}_i$ exposes only the fields needed by agent $i$, while $\pi_i:\mathcal{O}_i\times\mathcal{L}^{*}\rightarrow\Delta(\mathcal{U}_i)$ selects an action from that local view and the received message history. The dependency graph $\mathcal{P}$ specifies forward order and downstream invalidation, $T$ governs state transitions, and $\mathcal{I}$ contains the invariants that every committed state must satisfy.

The global state after logical step $t$ is
\begin{equation}
\begin{split}
s_t=\langle {}B_t,\mathcal{C}_t,\mathcal{E}_t,G^E_t,
\mathcal{R}_t,\mathcal{G}_t,\{Q_{k,t}\}_{k\in\mathcal{C}_t},
&D_t,Y_t,\mathcal{M}_t,\mathcal{H}_t,\Xi_t,\mathcal{T}_t\rangle.
\end{split}
\label{eq:global_state}
\end{equation}
containing the canonical case bundle $B_t$, scoped claims $\mathcal{C}_t$, normalized evidence $\mathcal{E}_t$, evidence graph $G^E_t$, research plans $\mathcal{R}_t$, arguments $\mathcal{G}_t$, claim-level A-QBAFs $Q_{k,t}$, subclaim and final decisions $D_t$ and $Y_t$, memory view $\mathcal{M}_t$, human reviews $\mathcal{H}_t$, uncertainty codes $\Xi_t$, and append-only trace $\mathcal{T}_t$. 

For $u\in\mathcal{U}_i$, let $\operatorname{Pre}_i(s_t,u)$ denote the stage prerequisites of agent $i$. The transition semantics are 
\begin{equation}
T(s_t,i,u)=
\begin{cases}
s', & \operatorname{Pre}_i(s_t,u)\ \wedge\ s'\models\mathcal{I},\\
\bot, & \text{otherwise},
\end{cases}
\label{eq:transition}
\end{equation}
where $\bot$ is a rejected update. An agent action is therefore not committed merely because it returns a syntactically valid object; it must also preserve the system-wide constraints.

\begin{definition}[Valid case state]
A state $s$ is valid, written $s\models\mathcal{I}$, when it jointly satisfies
\begin{equation}
s\models\mathcal{I}\iff{}I_{\mathrm{id}}(s)\wedge
I_{\mathrm{prov}}(s)\wedge I_{\mathrm{arg}}(s)\wedge
I_{\mathrm{qbaf}}(s)
\wedge I_{\mathrm{trace}}(s)\wedge I_{\mathrm{gold}}(s)
\wedge I_{\mathrm{freeze}}(s)\wedge I_{\mathrm{out}}(s).
\label{def:valid_state}
\end{equation}
The constraints require unique case-local identifiers, provenance for normalized evidence, correct admissibility status and grounding-aware discounted weighting for every argument entering an A-QBAF, and existing endpoints for every A-QBAF edge.
They also require the new trace to extend rather than overwrite the previous trace, prohibit access to hidden supervision before prediction, prohibit memory mutation during frozen evaluation, and require the final label to belong to the active dataset contract.
\end{definition}

State updates are committed sequentially to preserve these invariants. Independent extraction, retrieval, and claim-level reasoning operations execute concurrently, but their results enter the shared state only after validation.
Table~\ref{tab:agents} in the Supplementary Material summarizes the agent roles.

\section{Methodology}\label{sec:method}
Fig.~\ref{fig:semv_architecture}a divides SEMV into six modules. Modules A--E verify the current case by standardizing inputs, extracting multimodal evidence, decomposing and researching claims, constructing evidence-grounded bipolar arguments, and producing a contestable decision. Module F runs only after eligible supervision becomes available, consolidating grounded experience into reusable external memory without updating model parameters.

\begin{figure}[ht!]
    \centering
    \includegraphics[width=.95\linewidth]{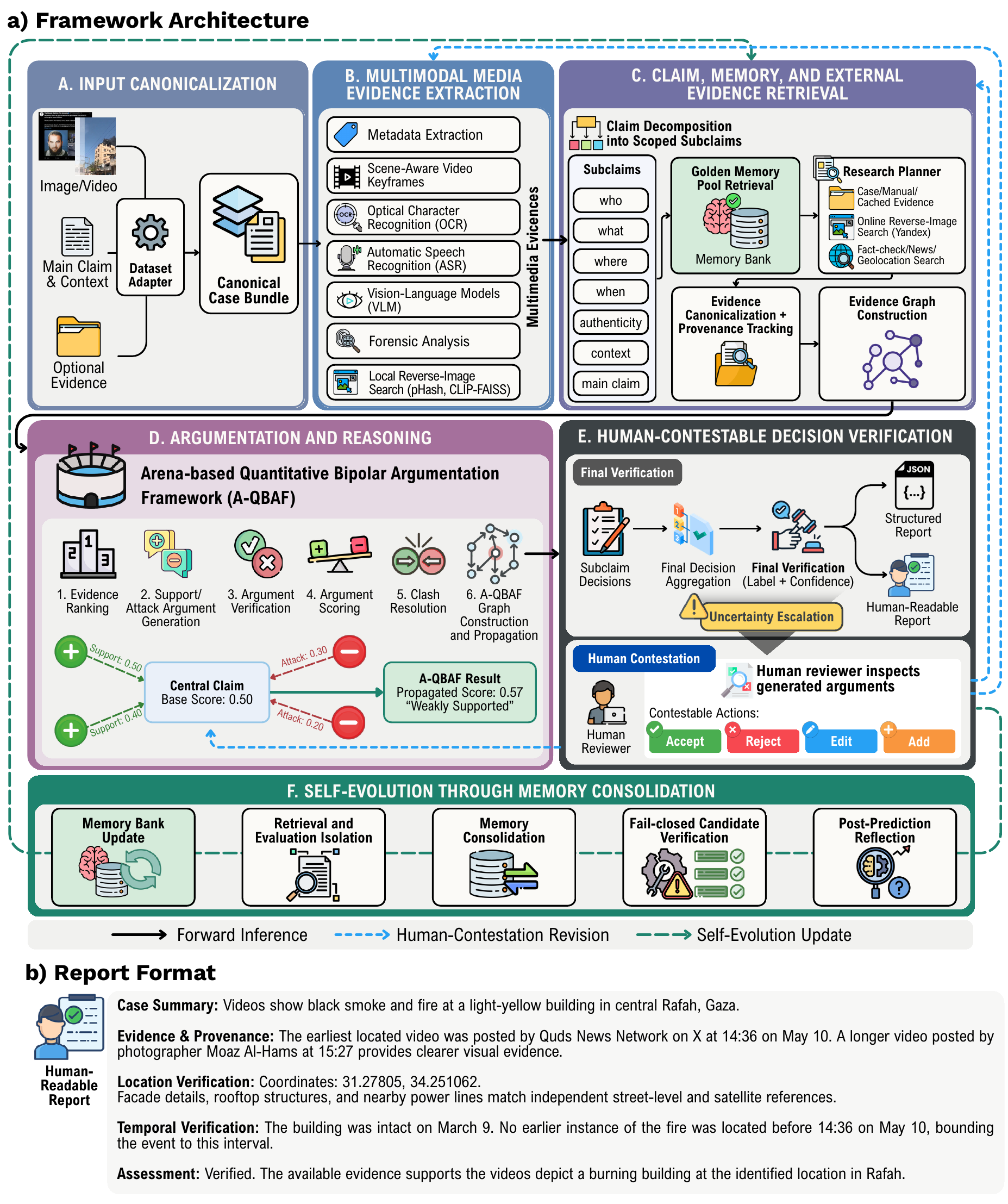}
    \caption{\textit{SEMV framework and report format.} a) End-to-end architecture for multimedia verification, human contestation, and memory-based self-evolution. Solid, blue dashed, and green dashed arrows denote forward inference, contestation-driven revision, and self-evolution updates, respectively. b) Human-readable report format summarizing the case, evidence and provenance, location and temporal verification, and final assessment.}
    \label{fig:semv_architecture}
\end{figure}

\subsection{Module A: Input Canonicalization}
\label{subsec:canonicalization}
Multimedia benchmarks differ in folder layout, metadata fields, label spaces, and reporting requirements. Module A isolates these differences through a dataset-specific adapter $B_x=\mathsf{Can}_{\delta}(x)$.
The resulting bundle contains the normalized textual input, media manifest, source groupings, available time and location context, optional supplied claims and evidence, and run and output settings.

For each asset, the adapter assigns a stable case-local identifier and records its media type, case role, path, source group, sequence position, language hints, and available context. Paths and schema fields are validated before execution, while content hashes are added later by the metadata extractor when the files are available. The adapter also translates the internal decision into the benchmark label space. Evaluation-only annotations remain inaccessible during inference and can be exposed to reflection only after the first-pass report is committed. Canonicalization therefore provides both schema interoperability and the leakage boundary for subsequent reasoning.

\subsection{Module B: Multimodal Media Evidence Extraction}\label{subsec:media_extraction}
Module B converts each canonical asset into localized, provenance-bearing observations. Let $\Ext(m_j)$ denote the enabled extractors applicable to asset $m_j$. The complete media-derived evidence set is:
\begin{equation}
  \Evid^{\mathrm{media}}_x=
  \bigcup_{m_j\in\Media_x}\ \bigcup_{g\in\Ext(m_j)}g(m_j;c_0,\ctx).
  \label{eq:media_evidence_union}
\end{equation}
The extractor family covers metadata inspection, scene-aware keyframe extraction, optical character recognition (OCR), automatic speech recognition (ASR), VLM-based visual interpretation, forensic analysis, and local reverse-image search; the specific tools, versions, and per-tool thresholds are listed in Sec.~\ref{app:tools} in Supplementary Material to keep this section focused on the verification logic.

For images, the original asset serves as the visual target; a SHA-256 digest supports integrity checking and duplicate detection, and the VLM returns structured observations covering scene, salient objects, visible people, embedded text, and location, time, event, and authenticity cues. For videos, scene boundaries are detected and the temporal midpoint of each scene is decoded, retaining at most $K_f$ representative frames ($K_f=8$ by default); when scene detection yields no usable boundaries, timestamps are sampled uniformly. Near-duplicate frames are removed by perceptual-hash distance, and the audio stream is transcribed with segment timestamps. Missing dependencies, unreadable media, and empty tool outputs are represented as uncertainty items rather than being interpreted as evidence for manipulation or false context. All tools return the common evidence representation in Eq.~\ref{eq:case_objects}.

\subsection{Module C: Claim, Memory, and External Evidence Retrieval}
\label{subsec:claim_memory_retrieval}

\paragraph{Scoped Claim Decomposition}
A broad caption may contain several independently verifiable statements. To make failures localizable, the decomposition agent produces:
\begin{equation}
  \mathcal{C}_x=\mathsf{Dec}(B_x,\mathcal{E}^{\mathrm{media}}_x)
  =\{c_k\}_{k=1}^{K}.
  \label{eq:claim_decomposition}
\end{equation}
Each subclaim stores an identifier, statement, claim type, initial queries, and case, media, segment, source, or event scope. The default dimensions are \emph{what}, \emph{where}, \emph{when}, \emph{who}, \emph{why}, and \emph{authenticity}. Deterministic fallbacks complete missing or malformed model output so that the configured dimensions remain available.

\paragraph{Retrieval from Consolidated Memory}
For each $c_k$, the memory agent searches only active long-term records. The query combines the claim, task scope, case context, media observations, temporal signals, and geolocation cues. Records are ranked by a weighted combination of semantic similarity, task and scope compatibility, agreement with the observed evidence pattern, consolidated confidence, independent support, and prior successful use; the weights are listed in Sec.~\ref{app:constants} in Supplementary Material. A record must pass separate semantic and final-score gates. Equivalent records are deduplicated before the top $K_m$ candidates are returned ($K_m=5$ by default). Retrieved memory can change the verification questions, search queries, and reasoning checks, but by Def.~\ref{def:admissibility} it is never inserted into the current-case evidence set.

\paragraph{Research Planning and External Retrieval}
The research planner turns each subclaim into verification questions, search queries, preferred source types, and uncertainty checks, enriched with OCR text, ASR phrases, visual cues, dates or coordinates, reverse-search matches, geolocation candidates, and any applicable memory guidance. The retriever combines supplied or cached evidence with enabled web, news, fact-checking, geolocation, and online reverse-image services. A retrieval gap produces an uncertainty observation rather than a fabricated source. The trace distinguishes memory that was merely retrieved from memory that actually influenced a plan.
The resulting evidence pool is:
\begin{equation}
  \mathcal{E}_x=\mathsf{Norm}\!\left(
  \mathcal{E}^{(0)}\cup\mathcal{E}^{\mathrm{media}}_x
  \cup\bigcup_{k=1}^{K}\mathcal{E}^{\mathrm{ext}}_k\right).
  \label{eq:combined_evidence}
\end{equation}
Normalization completes provenance records, merges repeated identifiers, retains the strongest available reliability and relevance values for a duplicate, and combines uncertainty codes. The evidence graph then links subclaims to usable evidence and evidence to source-provenance records. Before argument construction, evidence is ranked per subclaim, human-rejected items are excluded, and the ten highest-ranked items are passed to the argumentation agents by default.

\subsection{Module D: Argumentation and Reasoning}
\label{subsec:argumentation}

\paragraph{Proponent and Opponent Construction}
For each subclaim, proponents and opponents construct support and attack arguments from the same ranked evidence pool. Every generated argument must return current-case evidence identifiers and may separately disclose memory identifiers that shaped its construction. Unknown evidence identifiers are removed or repaired only when they resolve to a current-case item; an argument with no resolvable evidence fails admissibility. The argument verifier then assesses whether the linked evidence supports the argument text. Missing evidence, malformed verifier output, or verifier unavailability sets $v_a=0$ and records a uncertainty code.

\paragraph{Argument Strength}
For the resolved evidence $\Evid_t(a)$, let $r_a$ and $q_a$ denote mean source reliability and claim relevance. Let $\dvs^{\mathrm{src}}_a=\min(n^{\mathrm{src}}_a/3,1)$ and $\dvs^{\mathrm{mod}}_a=\min(n^{\mathrm{mod}}_a/2,1)$ measure source and modality diversity, where $n^{\mathrm{src}}_a$ and $n^{\mathrm{mod}}_a$ count distinct sources and modalities. Let $\prov_a$ be the fraction of resolved items carrying provenance and $\spec_a=\clip(|t_a|/280,0.25,1)$ a length-normalized specificity proxy. The strength passed to the A-QBAF is:
\begin{equation}
w_a=\clip\bigl(\eta_v(a)\,\eta_u(a)\bigl[
0.20\,r_a+0.25\,q_a+0.15\,\dvs^{\mathrm{src}}_a
+0.15\,\dvs^{\mathrm{mod}}_a+0.20\,\prov_a+0.05\,\spec_a\bigr],0,1\bigr),
\label{eq:argument_strength}
\end{equation}
where $\eta_v(a)=1$ for a verified argument and $0.25$ otherwise, and $\eta_u(a)=0.85$ when uncertainty is present and $1$ otherwise. Each component lies in $[0,1]$, so $w_a\in[0,1]$ by construction, and the clip is defensive only. The score rewards relevant, well-provenanced, independently corroborated arguments. We use \textit{grounding-aware discounted weighting}, whereby arguments that fail verification remain traceable but receive reduced influence through $\eta_v(a)$ rather than being treated as verified evidence. This preserves diagnostic information while distinguishing provisional influence from admissible support. We note that $\spec_a$ is a proxy that rewards argument length rather than informational specificity directly. It carries the smallest weight in Eq.~\ref{eq:argument_strength}, and we report a sensitivity analysis over all six weights in
Sec.~\ref{subsec:sensitivity} in the Supplementary Material.

\paragraph{A-QBAF Construction and Propagation}
We adopt A-QBAF as the reasoning substrate because its support/attack relations and quantitative argument strengths provide an editable decision state through which accepted interventions can be propagated to the resulting judgment \cite{cao2026neuro}.

For each subclaim $c_k$, the reasoning agent constructs the claim-centered QBAF:
\begin{equation}
  Q_k=\langle V_k,R_k^{+},R_k^{-},\beta_k\rangle,
  \qquad V_k=\{c_k\}\cup\mathcal{G}_k,
  \label{eq:qbaf}
\end{equation}
where $R_k^{+}$ and $R_k^{-}$ contain support and attack edges from arguments to the central subclaim, and the neutral base value is $\beta_k(c_k)=0.5$. Mixed and neutral arguments remain in the audit trace but do not create bipolar edges. The total support, total attack, and net argumentative effect are:
\begin{equation}
  S_k=\sum_{(a,c_k)\in R_k^{+}}w_a, \quad
  A_k=\sum_{(a,c_k)\in R_k^{-}}w_a, \qquad
  \Delta_k=S_k-A_k.
  \label{eq:qbaf_effect}
\end{equation}

Using the saturating influence function $\infl(\nu)=\max(\nu,0)^2/[1+\max(\nu,0)^2]$, the propagated claim score is:
\begin{equation}
  \score_k=\clip\!\left(
  \beta_k+(1-\beta_k)\,\infl(\Delta_k)-\beta_k\,\infl(-\Delta_k),0,1\right).
  \label{eq:qbaf_propagation}
\end{equation}

Since $V_k$ contains only the central subclaim and its direct arguments, $Q_k$ forms a star graph to keep each argument directly linked to both its current-case evidence (Def.~\ref{def:admissibility}) and the subclaim it informs. Conflicting evidence is represented as competing support and attack on the same subclaim, providing a compact and traceable case representation. Accordingly, Eq.~\ref{eq:qbaf_propagation} applies the gradual semantics to this structure.

Two properties follow directly and are stated here because they govern the interpretation of the decision thresholds introduced below. First, with $\beta_k=0.5$ and $\infl\in[0,1)$, the map is strictly monotone in $\Delta_k$ and confined to $(0,1)$, so the outer clip never binds. Second, inverting Eq.~\ref{eq:qbaf_propagation} shows that the decision bands correspond to fixed thresholds on the \emph{net} evidence weight, with $\score_k>0.70$ requiring $\Delta_k>0.816$ and $\score_k>0.55$ requiring $\Delta_k>0.333$. A single verified argument with maximal component scores attains $w_a=1$, so one such argument is sufficient to reach the supported band. The function also saturates quickly ($\Delta_k=2\Rightarrow\score_k=0.90$;$\Delta_k=5\Rightarrow\score_k=0.98$), and $\score_k$ therefore carries little resolution beyond $\Delta_k\approx3$. 
When the strongest support and attack both reach $0.55$ and differ by at most $0.25$, the clash resolver flags a major factual conflict, reduces the competing argument scores, and re-propagates the graph. The score is converted to a subclaim decision by:
\begin{equation}
d_k=
\begin{cases}
\mathsf{refuted}, & \score_k<0.30,\\
\mathsf{weakly\text{-}refuted}, & 0.30\leq \score_k<0.45,\\
\mathsf{uncertain}, & 0.45\leq \score_k\leq0.55,\\
\mathsf{weakly\text{-}supported}, & 0.55<\score_k\leq0.70,\\
\mathsf{supported}, & \score_k>0.70.
\end{cases}
\label{eq:subclaim_mapping}
\end{equation}

\paragraph{Final Decision Aggregation}
The aggregator applies a priority-ordered set of rules. Refuted authenticity yields \emph{manipulated or synthetic}. Supported event content together with a refuted where, when, who, or caption-context claim yields \emph{false context}, which the COSMOS adapter maps to \emph{OOC}. If all present core dimensions (what, where, when, who, and authenticity) are supported, the result is \emph{verified} when their mean score is at least $0.70$ and \emph{mostly verified} otherwise. A weakly supported subclaim yields \emph{partially
verified}. If at least $\max(1,\lfloor K/2\rfloor)$ subclaims are uncertain, the result is \emph{uncertain}. A case whose mean subclaim score falls below $0.45$ without triggering any preceding rule is reported as \emph{insufficient evidence}; any remaining unresolved case is \emph{uncertain}.

Let $\bar{\score}=K^{-1}\sum_{k=1}^{K}\score_k$ be the mean subclaim score. The internal escalation confidence is:
\begin{equation}
\gamma^{\mathrm{esc}}=\max\!\left\{2\left|\bar{\score}-0.5\right|,\
\cfloor{\tilde{y}}\right\},
\label{eq:final_confidence}
\end{equation}
where $\tilde{y}$ is the internal decision and $\cfloor{\tilde{y}}$ a label-specific floor preventing a decisive rule from being paired with a near-zero confidence (values in Sec.~\ref{app:constants} in the Supplementary Material).

For binary adapters, the class probability required by Eq.~\ref{eq:cosmos_labels} is obtained from the mean subclaim score rather than from the escalation confidence:
\begin{equation}
p_i=\mathbf{p}_i(\mathsf{OOC})=
\begin{cases}
\bar{\score}_i, & \text{internal decision is false-context},\\
1-\bar{\score}_i, & \text{otherwise}.
\end{cases}
\label{eq:prob_map}
\end{equation}
The adapter then maps the internal label to $y\in\Labels_{\delta}$ and the report agent renders the result, evidence links, arguments, uncertainty reasons, and trace into the human-readable format illustrated in Fig.~\ref{fig:semv_architecture}b.

\subsection{Module E: Human-Contestable Decision Verification}
\label{subsec:contestation}

\paragraph{Uncertainty Detection}
Before a decision is presented as final, the uncertainty detection module checks each subclaim for conditions that warrant review. These include a score in the neutral band, missing or uniformly low-reliability evidence, OCR--ASR disagreement, metadata--claim conflict, an earlier-context reverse-search match, forensic--source conflict, and a near tie between support and attack. The system preserves the assigned decision together with the reason codes and affected stages, so a reviewer can therefore distinguish factual conflict from a missing tool, missing evidence, or insufficient retrieval.

\paragraph{Human Contestation as Causal Intervention}

\begin{figure}[htbp]
\centering
\begin{minipage}[t]{0.49\textwidth}
\vspace{0pt}
\centering
\resizebox{\linewidth}{!}{
\begin{tikzpicture}[
    stage/.style={draw=SEMVNavy,rounded corners=2pt,fill=PaleBlue,
        minimum width=22mm,minimum height=8mm,font=\scriptsize\bfseries,align=center},
    humanbox/.style={draw=SEMVTeal,rounded corners=2pt,fill=PaleGreen,
        minimum width=28mm,minimum height=7mm,font=\scriptsize\bfseries,align=center},
    arr/.style={-{Latex[length=2mm,width=1.3mm]},thick,draw=SEMVNavy,
        shorten <=1pt,shorten >=1pt},
    human/.style={-{Latex[length=2mm,width=1.3mm]},dashed,thick,draw=SEMVTeal,
        shorten <=1pt,shorten >=1pt},
    routelabel/.style={font=\scriptsize,text=SEMVTeal,fill=white,
        inner xsep=1.5pt,inner ysep=0.8pt,align=center}]
\node[stage] (c) at (0,0) {Claims $\stC$};
\node[stage] (r) at (2.7,0) {Retrieve $\stR$};
\node[stage] (v) at (5.4,0) {Validate $\stV$};
\node[stage] (a) at (5.4,-3.2) {Arguments $\stA$};
\node[stage] (q) at (2.7,-3.2) {A-QBAF $\stQ$};
\node[stage] (f) at (0,-3.2) {Aggregate $\stF$};
\node[stage,draw=orange!75!black,fill=orange!18] (o) at (0,-1.64) {Report $\stO$};
\draw[arr] (c.east) -- (r.west);
\draw[arr] (r.east) -- (v.west);
\draw[arr] (v.south) -- (a.north);
\draw[arr] (a.west) -- (q.east);
\draw[arr] (q.west) -- (f.east);
\draw[arr] (f.north) -- (o.south);
\node[humanbox, minimum height=23pt] (h) at (2.7,-1.64) {Human action $h$};
\draw[human] (h.north) -- node[routelabel,pos=0.50,left=1pt] {add source} (r.south);
\draw[human] (h.north east) -- node[routelabel,above,pos=0.29] {reject evidence} (5,-0.4);
\draw[human] (h.south east) -- node[routelabel,below,pos=0.35] {edit argument} (4.98,-2.8);
\draw[human] (h.south) -- node[routelabel,pos=0.50,left=1pt] {accept} (q.north);
\end{tikzpicture}}
\captionof{figure}{\textit{Localized human contestation.} Dashed arrows route each human action to the earliest affected stage $u^{*}$.}
\label{fig:revision}
\end{minipage}
\hfill
\begin{minipage}[t]{0.49\textwidth}
\vspace{0pt}
\centering
\captionof{table}{\textit{Default long-term memory promotion requirements.}
Thresholds correspond to $(\theta_\tau^{c},\theta_\tau^{x},\theta_\tau^{s})$ in
Eq.~\ref{eq:memory_promotion}.}
\label{tab:memory_thresholds}
\small
\renewcommand{\arraystretch}{1.18}
\resizebox{.92\linewidth}{!}{
\begin{tabular}{lccc}
\toprule
\textbf{Memory type} &
$\boldsymbol{\theta_\tau^{c}}$ &
$\boldsymbol{\theta_\tau^{x}}$ &
$\boldsymbol{\theta_\tau^{s}}$ \\
 & conf. & cases & sources \\
\midrule
Episodic observation & $0.85$ & $1$ & $1$ \\
Failure lesson       & $0.70$ & $2$ & $2$ \\
Semantic rule        & $0.75$ & $3$ & $3$ \\
\bottomrule
\end{tabular}}
\end{minipage}
\end{figure}

Human review is modeled as an intervention on the computation that connects evidence to the final decision. A review action is represented as: 
\begin{equation}
h=\bigl\langle
o_h,\ \mathrm{tgt}_h,\ t_h,\ \varsigma_h,\
E_h^{+},E_h^{\Delta},\ \mathrm{rsn}_h,\ \mathrm{meta}_h
\bigr\rangle,
\label{eq:human_intervention}
\end{equation}
where $o_h\in\{\textsf{accept},\textsf{reject},\textsf{edit},\textsf{add}\}$ is available contestation action, $\mathrm{tgt}_h$ identifies the affected argument or subclaim, $t_h$ and $\varsigma_h$ are optional revised text and stance, $E_h^{+}$ and $E_h^{\Delta}$ contain newly supplied and edited evidence identifiers, $\mathrm{rsn}_h$ records the reviewer's reason, and $\mathrm{meta}_h$ stores metadata including an optional explicit revision target. Review therefore modifies the evidence--argument--decision relation rather than attaching an unstructured comment to the final label.

The stage symbols follow the pipeline order $\stC\prec\stR\prec\stV\prec\stA\prec\stQ\prec\stF\prec\stO$ (claims, retrieval, validation, arguments, A-QBAF, aggregation, and output). If no explicit target is provided, our system routes to the earliest pipeline stage that the intervention invalidates. By default:
\begin{equation}
\operatorname{route}(h)=
\begin{cases}
\stQ, & o_h=\textsf{accept},\\
\stR, & \text{$h$ introduces, replaces, or challenges source evidence},\\
\stV, & \text{$h$ challenges the content or validity of existing evidence},\\
\stA, & \text{otherwise}.
\end{cases}
\label{eq:revision_routing}
\end{equation}

Thus, missing, mismatched, or newly supplied sources restart processing from retrieval; challenges involving irrelevant, unsupported, or factually inconsistent evidence restart from validation; other edits, additions, and rejections restart from argument construction; and acceptance requires only A-QBAF reasoning and the subsequent decision stages. A valid explicit target in $\mathrm{meta}_h$ overrides these defaults. Fig.~\ref{fig:revision} illustrates the four routing paths.

For a non-empty review batch $H$, let $\mathcal{C}_H$ denote the subclaims identified directly or through the contested arguments and evidence. The executor selects the earliest required stage and applies the interventions before recomputation:
\begin{equation}
u_H=\min_{\prec}\{\operatorname{route}(h):h\in H\},
\qquad
s^{(H)}
=
T^{*}_{\mathsf{Closure}(H)}
\bigl(\mathsf{Apply}(s,H)\bigr),
\label{eq:scoped_revision}
\end{equation}
where $\Objs^{*}$ is the downstream dependency closure of $u^{*}$ for $\Claims_H$ in the case-specific dependency graph, and $T^{*}_{\Objs^{*}}$ is the ordered composition of the corresponding stage transitions. Retrieval, validation, and argument reconstruction are localized to the affected subclaims, while A-QBAF reasoning, final aggregation, and report generation are recomputed as required to maintain a consistent case-level decision. If the supplied identifiers are insufficient to establish a safe local scope, $\Claims_H$ or $u^{*}$ is expanded conservatively.

Rejected arguments remain in the audit trace but are excluded from the revised A-QBAF. An edit preserves the original argument and introduces a linked replacement, while an added argument carries explicit human provenance. When recomputation begins at argument construction or the earlier, retained human-reviewed arguments are revalidated and rescored against the current evidence state.
Importantly, contestation never overwrites the initial report. The append-only trace stores the review batch, inferred revision plan, affected subclaims and evidence, the first-pass output $F^{(0)}(x)$, the revised output $F^{(H)}(x)$, and their label and confidence differences.

\begin{definition}[Contestation locality]\label{def:locality}
Let \(\mathcal{Z}\) be the set of observable state objects and \(\mathcal{Z}^* \subseteq\mathcal {Z}\) the intended dependency scope of a human contestation \(H\). Contestation locality is the proportion of state objects outside this scope that are preserved after revision:
\begin{equation}
\mathsf{Loc}(H)=1-
\frac{\displaystyle\sum_{z\in\Objs\setminus\Objs^{*}}
\mathbf{1}\!\left[z\!\left(s^{(H)}\right)\neq z(s)\right]}
{\left|\Objs\setminus\Objs^{*}\right|}.
\label{eq:contestation_locality}
\end{equation}
If $\Objs\setminus\Objs^{*}=\varnothing$ we define $\mathsf{Loc}(H)=1$.
\end{definition}

An idealized system with perfect dependency tracking and deterministic transitions attains $\mathsf{Loc}(H)=1$, but that statement is definitional and uninformative about a real implementation, in which dependency tracking is incomplete, state is shared and mutable, retrieval is live, and generation is stochastic. The useful question is how far locality can fall, and why. The following decomposition answers it and makes the measured values in Sub.~\ref{subsec:contest} interpretable.

\begin{proposition}[Locality degradation bound]
\label{prop:locality_bound}
Let $\Objs^{\mathrm{true}}$ denote the true causal closure of $H$, i.e., the set of objects whose values can depend on the intervention, and let $\Objs^{*}$ denote the scope computed by the executor. Define the \emph{untracked-dependency count} as $\delta_H=\left|\Objs^{\mathrm{true}}\setminus\Objs^{*}\right|$ and the \emph{spurious-write count} $\sigma_H$ as the number of objects outside $\Objs^{\mathrm{true}}\cup\Objs^{*}$ whose values change after the intervention. Then:
\begin{equation}
1-\mathsf{Loc}(H)
\leq
\frac{\delta_H+\sigma_H}
{\left|\Objs\setminus\Objs^{*}\right|}.
\label{eq:locality_bound}
\end{equation}
The bound is tight when every untracked dependent changes value.
\end{proposition}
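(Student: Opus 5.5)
The plan is a direct counting argument on the numerator of Eq.~\ref{eq:contestation_locality}. By Def.~\ref{def:locality},
\[
1-\mathsf{Loc}(H)=\frac{N_H}{\left|\Objs\setminus\Objs^{*}\right|},
\qquad
N_H=\bigl|\{z\in\Objs\setminus\Objs^{*}: z(s^{(H)})\neq z(s)\}\bigr|.
\]
So it suffices to show $N_H\le\delta_H+\sigma_H$. The degenerate case $\Objs\setminus\Objs^{*}=\varnothing$ is handled by the convention $\mathsf{Loc}(H)=1$. In that case the left-hand side is $0$, and I will read the right-hand side as vacuous or $+\infty$.

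First I partition the out-of-scope set $\Objs\setminus\Objs^{*}$ into two disjoint pieces. The first piece is $P_1=(\Objs\setminus\Objs^{*})\cap\Objs^{\mathrm{true}}=\Objs^{\mathrm{true}}\setminus\Objs^{*}$. The second is $P_2=(\Objs\setminus\Objs^{*})\setminus\Objs^{\mathrm{true}}=\Objs\setminus(\Objs^{\mathrm{true}}\cup\Objs^{*})$. Every changed out-of-scope object lies in exactly one of these, so $N_H$ splits into the changed objects in $P_1$ plus the changed objects in $P_2$.

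I then bound each piece separately. The changed objects in $P_1$ number at most $|P_1|=\delta_H$, since I simply ignore whether they change. The changed objects in $P_2$ number exactly $\sigma_H$, because this is precisely the definition of the spurious-write count. Adding the two gives $N_H\le\delta_H+\sigma_H$, and dividing by $|\Objs\setminus\Objs^{*}|$ yields Eq.~\eqref{eq:locality_bound}.

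For tightness, note that the only slack is the inequality on $P_1$. If every untracked dependent in $P_1$ actually changes value, that inequality becomes an equality and so does the bound.

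The main obstacle is interpretive rather than technical. The statement mixes realized changes in $\sigma_H$ with potential dependence in $\Objs^{\mathrm{true}}$. The proof must therefore take both counts with respect to the same pair of states $(s,s^{(H)})$ and the same observable set $\Objs$. I will state this explicitly so the partition argument applies without further assumptions about determinism.
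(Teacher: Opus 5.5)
Your proposal is correct and matches the paper's proof step for step: split the changed out-of-scope objects into those in $\Objs^{\mathrm{true}}\setminus\Objs^{*}$ (at most $\delta_H$) and those outside $\Objs^{\mathrm{true}}\cup\Objs^{*}$ (exactly $\sigma_H$), add the two counts, divide by $|\Objs\setminus\Objs^{*}|$, and note that the only slack disappears when every untracked dependent changes value. Your extra remarks on the empty-denominator convention and on using the same state pair $(s,s^{(H)})$ for both counts are sensible points the paper leaves implicit; one small caveat is that your identity $P_1=\Objs^{\mathrm{true}}\setminus\Objs^{*}$, and hence the tightness claim, tacitly assumes $\Objs^{\mathrm{true}}\subseteq\Objs$, whereas the inequality only needs $P_1\subseteq\Objs^{\mathrm{true}}\setminus\Objs^{*}$.
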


\begin{proof}
Any changed object outside the computed scope $\Objs^{*}$ is either (i) an object in $\Objs^{\mathrm{true}}\setminus\Objs^{*}$ whose dependency was not captured by the computed scope, or (ii) an object outside the true causal closure that nevertheless changes. The first set contains at most $\delta_H$ changed objects, while the second contains $\sigma_H$ by definition. Thus, the number of changed objects outside $\Objs^{*}$ is at most $\delta_H+\sigma_H$. Dividing by $\left|\Objs\setminus\Objs^{*}\right|$ and applying Eq.~\ref{eq:contestation_locality} gives Eq.~\ref{eq:locality_bound}.
Equality holds when every object in $\Objs^{\mathrm{true}}\setminus\Objs^{*}$ changes under the intervention.
\end{proof}

Proposition~\ref{prop:locality_bound} relates observed locality degradation to two sources: dependencies omitted from the computed scope and changes outside the true causal closure. Accordingly, a low value of $\mathsf{Loc}(H)$ can indicate incomplete scope estimation, unintended out-of-scope state changes, or both.

\subsection{Module F: Self-Evolution through Memory Consolidation}
\label{subsec:self_evolution}
Whereas human contestation revises the current case, self-evolution transfers verified experience to subsequent cases without updating model parameters, following the operational definition in Def.~\ref{def:selfevo}. 

\paragraph{Long-term Memory Composition} Long-term memory contains three types of records: case-specific episodic observations, recurring failure lessons, and generalized semantic rules. Each record stores its lesson, memory type, task and claim scope, trigger and evidence patterns, recommended action, confidence, supporting cases and source fingerprints, conflicts, usage outcomes, lifecycle status, and provenance.

\paragraph{Post-Prediction Reflection}
Reflection begins only after the first-pass or contested report has been finalized. In an eligible training or bootstrap run, the system may then reveal a gold label or use human feedback to identify successful strategies and failure modes. Diagnoses include label mismatch, overconfident error, confusion between contextual misuse and media manipulation, retrieval or tool failure, unresolved argumentative clash, temporal or geolocation error, and weak provenance.

Reflection produces candidate episodic observations, failure lessons, and semantic rules. Each candidate $\cand$ records its lesson and applicability scope, grounding evidence identifiers $E_{\cand}$, grounding argument identifiers $A_{\cand}$, source case and source fingerprint, confidence $c_{\cand}$, supervision source, and verification status.

\paragraph{Candidate Verification Gating}
Let $\Learn$ denote cases processed in an eligible training or bootstrap run, and let $\Evid_i$ and $\Args_i$ denote the evidence and arguments in the finalized trace of source case $x_i$. Candidate activation is governed by:
\begin{equation}
\begin{aligned}
\operatorname{Verify}(\cand)=\mathbf{1}\bigl[&
x(\cand)\in\Learn
\ \land\ (|E_{\cand}|+|\Args_{\cand}|>0)\\
&\land\ E_{\cand}\subseteq\IDs(\Evid_i)
\ \land\ \Args_{\cand}\subseteq\IDs(\Args_i)\land\ c_{\cand}\geq\theta_v
\ \land\ \operatorname{Safe}(\cand)
\ \land\ \neg\operatorname{Dup}_{\mathrm{src}}(\cand)\bigr],
\end{aligned}
\label{eq:memory_verification}
\end{equation}
where $x(\cand)=x_i$ is the candidate's source case and $\theta_v=0.60$ by default. The predicate $\operatorname{Safe}(\cand)$ rejects unsupported or overgeneralized lessons, while $\operatorname{Dup}_{\mathrm{src}}(\cand)$ detects a canonical duplicate already supported by the same case or source fingerprint. Equivalent observations from independent cases are retained because they provide additional support.

Candidates derived from validation or test cases cannot pass Eq.~\ref{eq:memory_verification}. Missing or unresolved grounding blocks verification, while an unavailable or malformed safety check places the candidate under review. A verified contradiction is stored as counter-evidence against the related record rather than activated as a competing rule.

\paragraph{Memory Consolidation}
Verified and under-review candidates are persisted in short-term memory, but only verified candidates participate in promotion. Short-term records exceeding their retention period are archived rather than silently deleted. Every 25 processed learning cases, the consolidator groups compatible observations by memory type, task, claim, applicability scope, and semantic relation.

For a cluster $C$, let $\Ind(C)$ be the largest selected subset in which neither a case identifier nor a source fingerprint is repeated. Its consolidated confidence is:
\begin{equation}
\operatorname{Conf}(C)
=\frac{\alpha_C+\alpha_0}{\alpha_C+\beta_C+\alpha_0+\beta_0},
\quad
\alpha_C=\sum_{\cand\in\Ind(C)}c_{\cand},
\quad
\beta_C=\sum_{\cand\in\Ind(C)}(1-c_{\cand}),
\label{eq:cluster_confidence}
\end{equation}

Let $n_C^{\mathrm{case}}$ and $n_C^{\mathrm{src}}$ denote the numbers of independent cases and source fingerprints in $\Ind(C)$. For memory type $\tau$, promotion is defined by:
\begin{equation}
\operatorname{Promote}(C,\tau)=\mathbf{1}\biggl[
\operatorname{Conf}(C)\geq\theta_\tau^{c}
\land n_C^{\mathrm{case}}\geq\theta_\tau^{x}
\land n_C^{\mathrm{src}}\geq\theta_\tau^{s}
\land\bigwedge_{\cand\in\Ind(C)}\operatorname{Verify}(\cand)=1\biggr].
\label{eq:memory_promotion}
\end{equation}
Default thresholds are given in Table~\ref{tab:memory_thresholds}. The confidence bar is highest for episodic observations because they may be promoted on the strength of a single case: with $\theta^{x}_{\mathrm{epi}}=1$, confidence is the only remaining safeguard. Failure lessons and semantic rules require two and three independent cases and sources respectively, so a lower per-cluster confidence bar still yields a higher total evidential burden. Equivalent or mutually entailing observations are merged into the same long-term record.

\paragraph{Confidence and Conflict Resolution}
Each long-term record $m$ maintains support parameters $(\alpha_m,\beta_m)$, initialized to $(1,1)$. Independent support or contradiction of confidence $c$ updates them as:
\begin{equation}
(\alpha_m,\beta_m)\leftarrow
\begin{cases}
(\alpha_m+c,\ \beta_m+1-c),
& \text{support},\\
(\alpha_m+1-c,\ \beta_m+c),
& \text{contradiction},
\end{cases}
\qquad
c(m)=\frac{\alpha_m}{\alpha_m+\beta_m}.
\label{eq:memory_confidence_update}
\end{equation}

Let $\cratio(m)=n_m^{-}/(n_m^{+}+n_m^{-})$ be the conflict ratio computed from independent support and conflict counts. A ratio above $0.30$ moves an active record under review; deprecation requires both observed contradiction and confidence below $0.45$; an under-review record becomes active again only when $\cratio(m)\leq0.20$ and $c(m)\geq0.45$. Promotions, merges, conflicts, reviews, reactivations, deprecations, and archives are appended to the memory trace.

\paragraph{Retrieval and Evaluation Isolation}
Only active long-term records are retrievable by Module C (Sec.~\ref{subsec:claim_memory_retrieval}). The system distinguishes retrieval exposure from causal use by logging whether a record was retrieved, cited in planning, cited in argument construction, used successfully, or subsequently contested. This separation prevents retrieval frequency from being interpreted as beneficial influence.
For validation and test, the complete memory state is frozen and identified by a deterministic hash.

\begin{remark}[Evaluation-state invariance]
\label{rem:frozen}
In frozen mode the invariant $I_{\mathrm{freeze}}$ of Def.~\ref{def:valid_state} rejects candidate staging, case registration, consolidation, and long-term updates, and evaluation usage events are written outside the snapshot. By Eq.~\ref{eq:transition} no committed transition can therefore alter $\Mem^{L}$, so the snapshot present at the start of evaluation is bit-identical to the snapshot at the end. 
\end{remark}

\section{Experimental Design}
\label{sec:experiments}

\subsection{Datasets and Evaluation Protocols}
We summarize the three evaluation sets and the role each split plays, as shown in Table~\ref{tab:datasets}.

\paragraph{COSMOS}
We use the official COSMOS splits and evaluate on the complete 1,700-image test set, which contains one correct and one mismatched caption for each image \cite{aneja2023cosmos}. The official validation split of 41,006 images is used to select prompts, decision bands, strength weights, and memory-policy thresholds. No memory state produced during development is retained for test evaluation. 

Separately, we define a fixed stratified subset of 1,200 cases from the official COSMOS training split as the memory-learning partition. After all configuration constants have been frozen on the development split, the pre-test memory-learning partition is used to construct the consolidated-memory snapshot. When an experiment requires rebuilding memory, the same 1,200 cases and the same ordering are used for every compared configuration.

To reduce near-duplicate leakage, we compare perceptual image hashes and normalized caption fingerprints between the test set and all items included in the retrieval index or pre-test memory-learning partition. All paired comparisons use identical image--caption identifiers. 

Every full-scale SEMV result on COSMOS is computed over all 1,700 test images. Each full-scale condition uses a single decoding seed because repeating all backbone, memory, argumentation, and sensitivity conditions at full scale exceeded our compute budget. To bound run-to-run variation, we additionally rerun every principal condition with three decoding seeds on a fixed stratified 500-case subset and report the resulting standard deviations and ranges in Table~\ref{tab:run_stability} in the Supplementary Materials.

\paragraph{MV2026}
MV2026 provides 50 training cases, 10 validation cases, and 10 live challenge cases, with evaluation based on the quality of multimedia verification reports \cite{dang20262026}. As no official validation label is provided, we reserve 10 cases from the training set as a held-out subset for primary validation, using the remaining 40 for prompt and memory development.

\paragraph{Contestation Trace Revision (CTR)} To evaluate whether human feedback can contest and correct the verification process, we construct the CTR benchmark from the 10 MV2026 validation cases. Each episode presents reviewers with the retrieved evidence, generated arguments, support and attack relations, scores, dependencies, and intermediate decisions, together with the available contestation actions $o_h$. Reviewers identify the defective trace components, specify the required revisions, and provide the expected downstream state after dependency-aware recomputation. CTR evaluates contestation-action selection, error localization, revision correctness, routing to the earliest affected module, intervention locality, and recovery of the final decision and report.

The annotation protocol involves three annotators: 2 graduate researchers with prior multimedia-verification experience independently define the target state $s_i^{*}$ and dependency closure $\Objs_i^{*}$, and 1 senior researcher adjudicates disagreements. The primary annotators are not involved in implementing SEMV and are blinded to replay configuration. Episodes are randomized and system identifiers removed. Each source case contributes four corrupted episodes and one uncorrupted control. The 40 corruptions are balanced across five families (i.e., evidence or provenance, temporal grounding, geolocation, argument stance or strength, and aggregation or report-field errors) with eight instances per family. A rotated assignment gives each case four distinct corruption families.

\begin{table*}[t]
\centering
\caption{Dataset Splits and Evaluation Protocols.}
\label{tab:datasets}
\footnotesize
\setlength{\tabcolsep}{6pt}
\renewcommand{\arraystretch}{1.2}

\begin{tabularx}{\textwidth}{p{4.2cm}p{2.5cm}X}
\toprule
\textbf{Split} &
\textbf{Cases} &
\textbf{Purposes} \\
\midrule

\rowcolor{gray!18}
\multicolumn{3}{l}{\textbf{COSMOS} \cite{aneja2023cosmos}} \\

Train & 161,752 & Retrieval-index construction and source pool for the fixed pre-test memory-learning partition 
\\ 
$\hookrightarrow$ Memory subset  & 1,200 & Consolidated-memory construction and per-setting memory reconstruction after all constants are frozen 
\\ 
\rowcolor{gray!7} Validation & 41,006 & 
\\
\rowcolor{gray!7} $\hookrightarrow$  Development subset & 40,506 & Prompt, decision-band, strength-weight, and memory-policy constants \\
\rowcolor{gray!7} $\hookrightarrow$  Sensitivity/Stability subset & 500 & Configuration sensitivity and three-seed stability analysis with frozen reference configuration \\
\\
Test & 1,700 & Final OOC detection, calibration, selective prediction, and configuration sensitivity evaluation only \\

\midrule
\rowcolor{gray!18}
\multicolumn{3}{l}{\textbf{MV2026} \cite{dang20262026}} \\

Train &
40 &
Prompt and memory development \\

\rowcolor{gray!7}
$\hookrightarrow$ Validation &
10 &
Pre-challenge evaluation and error analysis \\

Challenge Test &
10 &
Final report quality \\

\midrule
\rowcolor{gray!18}
\multicolumn{3}{l}{\textbf{Contestation Trace Revision (CTR)}} \\

Reviewer Evaluation &
10 (50 episodes) &
Contestation-action selection, error localization, revision correctness,
trace repair, and final-decision recovery \\

\bottomrule
\end{tabularx}
\end{table*}

\subsection{Consolidated Memory Construction}
\label{subsec:memory_provenance}
Because the memory component supplies most of the measured gain on COSMOS (Sub.~\ref{subsec:memory}), the origin of every long-term record must be auditable. Memory is learned exclusively from a dedicated pass over COSMOS training cases, in which, each case is processed through the full Modules A--E pipeline, the report is committed, and only then is the gold label revealed to Module F, as required by Eq.~\ref{eq:memory_verification} and the invariant $I_{\mathrm{gold}}$. No validation or test case can contribute a candidate. Before evaluation the memory state is frozen and hashed (Remark~\ref{rem:frozen}).

\paragraph{Frozen Memory Snapshot}
We process a class-balanced, source-stratified subset of
1,200 COSMOS training cases. The memory bank is generated by GLM-4.6V \cite{hong2025glm}, an open-weight VLM that is not among the four evaluation backbones and shares no weights, pretraining corpus, or model family with any of them. The frozen active snapshot contains 29 episodic, 12 failure, and 7 semantic records, with 8 additional conflict records retained under review.

\subsection{Agent Backbones and Baselines}
We compare four end-to-end SEMV runs by substituting one shared multimodal backbone across all agent roles and the VLM analyzer. Gemma4-31B \cite{gemma42026} and InternVL3.5-38B \cite{wang2025internvl3_5} are dense multimodal models; Qwen3.6-35B \cite{qwen36card} is a multimodal mixture-of-experts (MoE) model with 3B activated parameters; and Nemotron-3-Nano-Omni-30B is a hybrid MoE model with approximately 3B active parameters \cite{deshmukh2026nemotron}. All conditions receive the same canonical cases, original images, sampled keyframes, OCR/ASR transcripts, forensic derivatives, retrieved pages, reverse-image results, and cached tool responses. Visual resolution, frame count, thinking budgets, and maximum tool calls are matched.

Published baselines cover discriminative image-text systems, heuristic or boosting methods, and retrieval-augmented verification. Aneja \textit{et al.}  \cite{aneja2023cosmos} evaluated their adaptations on the original 1,700-record test set, whereas La \textit{et al.} evaluated on a 400-record  holdout \cite{la2022leverage}. We therefore group published scores by protocol and compare numerically only within the full-test block.

\paragraph{Computational Environment}
Experiments run on a single NVIDIA DGX Spark system built on the NVIDIA GB10 Grace Blackwell Superchip, comprising an integrated Blackwell GPU, and 128~GB of coherent unified memory. Models are served locally with \texttt{vLLM} 0.24.0 and \texttt{PyTorch} 2.11.0 in BF16. Runs span 20 July--12 August 2026. All cross-condition comparisons replay the same dated retrieval cache.

\section{Results}\label{sec:results}
In this section, we evaluate SEMV across four dimensions covering predictive performance and confidence reliability, verification report quality, self-evolving memory, and human contestability. 

Sec.~\ref{subsec:predictive} and \ref{subsec:report} examine whether SEMV produces accurate decisions with reliable confidence estimates and generates verification reports that are well grounded, temporally consistent, complete, and computationally efficient. Sec.~\ref{subsec:self_evolution} then evaluates self-evolving memory to determine whether verified experiences can be selectively consolidated, retrieved, and reused to improve future verification while limiting harmful transfer. Finally, Sec.~\ref{subsec:contest} assesses contestability by examining whether human corrections are faithfully translated into system updates, propagated through the appropriate causal dependencies, and applied through localized revision without unnecessarily altering already-correct states.

\subsection{Predictive Performance and Reliability}\label{subsec:predictive}
We evaluate SEMV on the COSMOS test set from two perspectives: predictive performance and confidence reliability. We first assess classification effectiveness across backbones and prior baselines, then examine whether the resulting confidence estimates reliably reflect prediction correctness.

\subsubsection{Predictive Evaluation}
We report accuracy, precision, recall, and F1 on the COSMOS test set, treating OOC as the positive class. We additionally report Wilson score 95\% confidence intervals on accuracy, which remain well calibrated for proportions near the extremes and at finite sample sizes, and we test each backbone against the strongest directly comparable baseline (Tran \textit{et al.} \cite{tran2022textual}) with an unpaired two-proportion $z$-test. The test is unpaired because per-record predictions are unavailable for the baseline, so significance is assessed using the reported accuracies and the shared test-set size.

\paragraph{Results} Fig.~\ref{fig:semv_metric_arena} plots the four SEMV backbones against the strongest directly comparable full-test baseline (Tran \textit{et al.} \cite{tran2022textual}).
Averaged over backbones, SEMV exceeds the baseline in accuracy (90.91\% vs.\ 89.10\%), precision (91.21\% vs.\ 87.80\%), and F1 (90.88\% vs.\ 89.30\%) at comparable recall (90.56\% vs.\ 90.80\%). All four variants match or exceed the baseline in accuracy; three improve F1 and Gemma4 matches it exactly. The two strongest backbones reach the same level from different directions: Nemotron3 is the most OOC-sensitive, with the highest recall (92.47\%, $+1.67$~pp) and F1 (91.82\%, $+2.52$~pp), whereas Qwen3.6 offers a precision-oriented operating point, with the highest accuracy (91.88\%, $+2.78$~pp) and precision (93.00\%, $+5.20$~pp) at a comparable recall of 90.59\%. Both gains are statistically significant under an unpaired two-proportion $z$-test ($p=0.009$ and $p=0.006$), and their 95\% accuracy CIs lie above the baseline point estimate. InternVL3.5 improves accuracy to 90.65\% and F1 to 90.62\%, but the accuracy gain is not significant at the 5\% level ($p=0.140$), and Gemma4 is statistically indistinguishable from the baseline ($p=0.820$). SEMV's gains are therefore not tied to a single backbone, though their magnitude and significance are backbone-dependent.

\begin{figure}[h]
    \centering

    \includegraphics[width=0.8\textwidth]{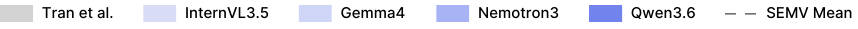}

    \vspace{2pt}
    \begin{subfigure}[t]{0.24\textwidth}
        \centering
        \includegraphics[width=\linewidth]{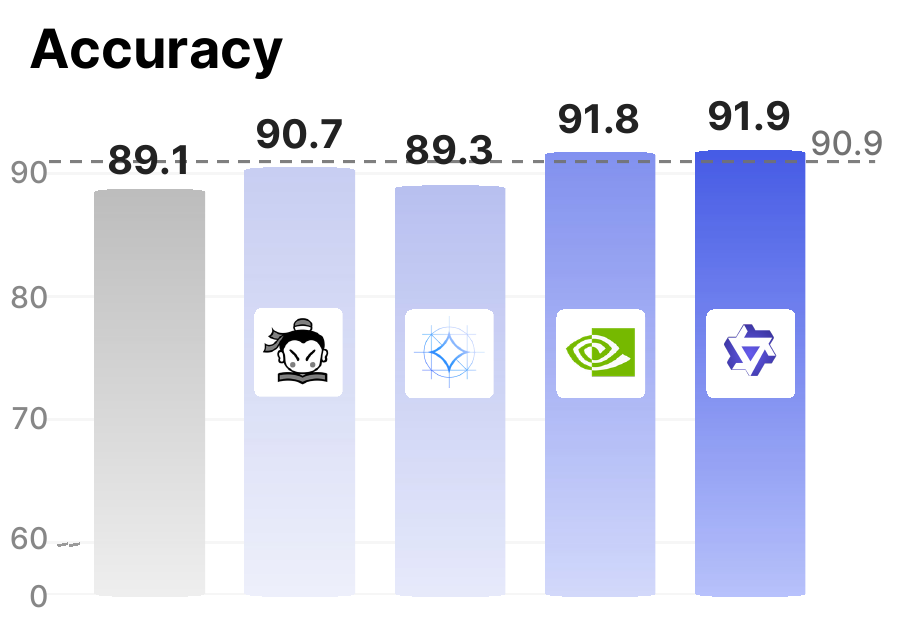}
        \caption{Accuracy}
        \label{fig:semv_metric_accuracy}
    \end{subfigure}
    \begin{subfigure}[t]{0.24\textwidth}
        \centering
        \includegraphics[width=\linewidth]{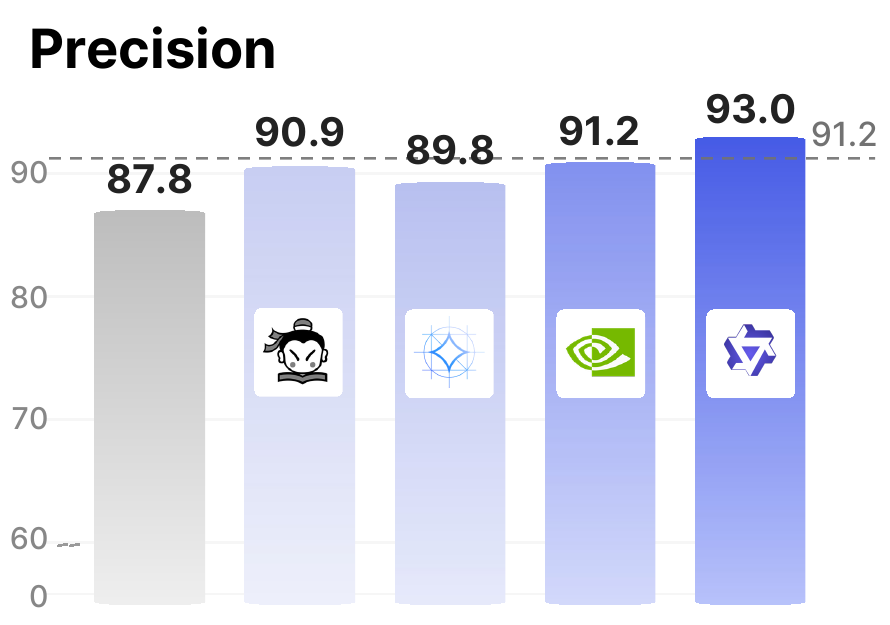}
        \caption{Precision}
        \label{fig:semv_metric_precision}
    \end{subfigure}
    \begin{subfigure}[t]{0.24\textwidth}
        \centering
        \includegraphics[width=\linewidth]{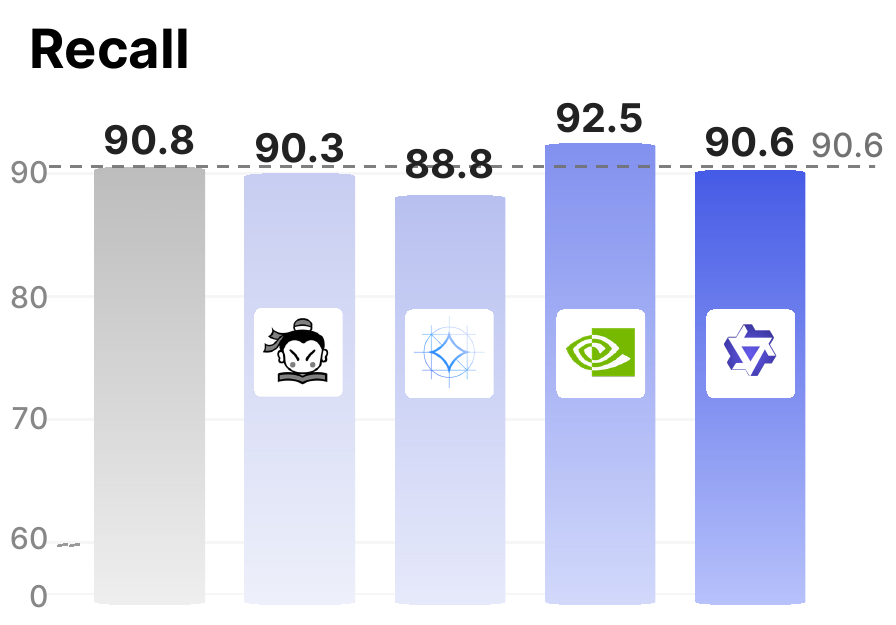}
        \caption{Recall}
        \label{fig:semv_metric_recall}
    \end{subfigure}
    \begin{subfigure}[t]{0.24\textwidth}
        \centering
        \includegraphics[width=\linewidth]{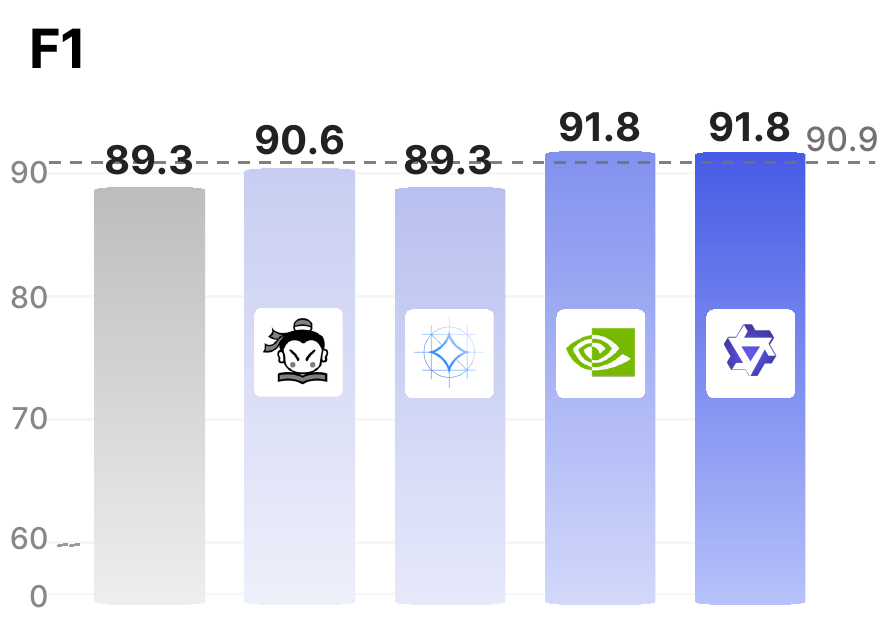}
        \caption{F1}
        \label{fig:semv_metric_f1}
    \end{subfigure}

    \caption{Comparison of SEMV backbones and the best comparable baseline \cite{tran2022textual} on the COSMOS full test set. The dashed line marks the mean score across the four SEMV backbones.}
    \label{fig:semv_metric_arena}
\end{figure}

\begin{table}[h]
\centering
\caption{COSMOS results grouped by evaluation set. The upper block reports prior work on the 400-record held-out subset \cite{la2022leverage}; the lower block reports results on the full 1,700-record test set \cite{aneja2023cosmos}, on which all SEMV backbones are evaluated. Best in \textbf{bold}, second best \underline{underlined}.}
\label{tab:cosmos_prior_results}

\small
\setlength{\tabcolsep}{4pt}
\renewcommand{\arraystretch}{1.12}
\begin{tabularx}{\textwidth}{
    >{\raggedright\arraybackslash}X
    *{4}{>{\centering\arraybackslash}p{1.02cm}}
    *{2}{>{\centering\arraybackslash}p{1.05cm}}
    >{\centering\arraybackslash}p{1.9cm}
    >{\centering\arraybackslash}p{0.85cm}}
\toprule
\textbf{Method} & \textbf{Acc.$\uparrow$} & \textbf{Prec.$\uparrow$}
& \textbf{Rec.$\uparrow$} & \textbf{F1$\uparrow$}
& \textbf{ECE$\downarrow$} & \textbf{Brier$\downarrow$}
& \textbf{95\% CI} & \textbf{$p$} \\
\midrule
\rowcolor{gray!16}
\multicolumn{9}{l}{\textbf{Evaluation set: held-out subset \cite{la2022leverage} ($n=400$)}} \\
SpotFake \cite{singhal2019spotfake} & 53.50 & 52.52 & 53.06 & 52.79 & \NA & \NA & \NA & \NA \\
EANN \cite{wang2018eann} & 63.00 & 60.25 & 61.22 & 61.85 & \NA & \NA & \NA & \NA \\
SBERT-WK \cite{wang2020sbert} & 77.00 & 72.41 & 85.71 & 78.50 & \NA & \NA & \NA & \NA \\
COSMOS baseline \cite{aneja2023cosmos} & 83.25 & 86.08 & 80.67 & 83.29 & \NA & \NA & \NA & \NA \\
COSMOS on Steroids \cite{akgul2021cosmos} & 89.75 & 87.38 & 93.71 & 90.44 & \NA & \NA & \NA & \NA \\
La \textit{et al.} \cite{la2022leverage} & 89.75 & 86.72 & 94.68 & 90.53 & \NA & \NA & \NA & \NA \\
\addlinespace[3pt]
\midrule
\rowcolor{PaleBlue}
\multicolumn{9}{l}{\textbf{Evaluation set: full test set \cite{aneja2023cosmos} ($n=1{,}700$)}} \\
EANN \cite{wang2018eann} & 63.00 & \NA & \NA & \NA & \NA & \NA & \NA & \NA \\
EmbraceNet \cite{choi2019embracenet} & 68.00 & \NA & \NA & \NA & \NA & \NA & \NA & \NA \\
Jin \textit{et al.} \cite{jin2017multimodal} & 71.00 & \NA & \NA & \NA & \NA & \NA & \NA & \NA \\
COSMOS \cite{aneja2023cosmos} & 85.00 & \NA & \NA & \NA & \NA & \NA & \NA & \NA \\
Tran \textit{et al.} \cite{tran2022textual} & 89.10 & 87.80 & \secondbest{90.80} & 89.30 & \NA & \NA & [87.6, 90.5] & \NA \\
\addlinespace[2pt]
\rowcolor{PalePurple!15}
\multicolumn{9}{l}{\textbf{SEMV (ours)}} \\
\rowcolor{PalePurple!5}
\quad +Nemotron3-30B & \secondbest{91.76} & \secondbest{91.18} & \best{92.47} & \best{91.82}
& \best{0.045} & \secondbest{0.065} & [90.4, 93.0] & 0.009 \\
\rowcolor{PalePurple!5}
\quad +Gemma4-31B & 89.35 & 89.77 & 88.82 & 89.30
& 0.085 & 0.091 & [87.8, 90.7] & 0.820 \\
\rowcolor{PalePurple!5}
\quad +InternVL3.5-38B & 90.65 & 90.89 & 90.35 & 90.62
& 0.085 & 0.075 & [89.2, 91.9] & 0.140 \\
\rowcolor{PalePurple!5}
\quad +Qwen3.6-35B & \best{91.88} & \best{93.00} & 90.59 & \secondbest{91.78}
& \secondbest{0.049} & \best{0.053} & [90.5, 93.1] & 0.006 \\
\bottomrule
\end{tabularx}
\end{table}

\subsubsection{Confidence Calibration}
Predictive accuracy alone does not establish whether confidence can be used reliably for automation. We therefore evaluate \textit{calibration}, which measures how closely predicted confidence corresponds to empirical correctness. For each prediction, confidence is defined as the winning-class probability obtained from the mean subclaim score using Eq.~\ref{eq:prob_map}.

\paragraph{Metrics}
We quantify confidence calibration using two metrics, which assess confidence–accuracy agreement and probabilistic prediction error.

\textit{Expected Calibration Error (ECE)} measures the gap between reported confidence and empirical correctness. With $L$ equal-width bins ${B_\ell}{\ell=1}^{L}$ formed from ${\gamma_i}{i=1}^{n}$, $\mathsf{acc}(B_\ell)=|B_\ell|^{-1}\sum_{i\in B_\ell}\mathbf{1}[\hat y_i=y_i]$, and $\mathsf{conf}(B_\ell)=|B_\ell|^{-1}\sum_{i\in B_\ell}\gamma_i$:

\begin{equation}
\mathsf{ECE}
=
\sum_{\ell=1}^{M}
\frac{|B_\ell|}{n}
\left|
\mathsf{acc}(B_\ell)
-
\mathsf{conf}(B_\ell)
\right|.
\end{equation}

\textit{Brier Score} measures the probabilistic accuracy of the predicted OOC probabilities and penalizes confident errors more strongly, as:
\begin{align}
\mathsf{Brier}
=
\frac{1}{n}
\sum_{i=1}^{n}
(p_i-b_i)^2,
\end{align}
Lower values indicate more accurate probability estimates.

\paragraph{Results}
Table~\ref{tab:cosmos_prior_results} shows that predictive strength and confidence quality are related but distinct. Nemotron3 attains the lowest ECE (0.045), indicating the closest agreement between reported confidence and empirical correctness under the adopted calibration measure, while ranking second in Brier score (0.065). Qwen3.6 attains the lowest Brier score (0.053), indicating the most accurate probabilistic predictions, and the second-lowest ECE (0.049), so its confidence estimates are both well-aligned in aggregate and accurate per record.
InternVL3.5 remains competitive in Brier score (0.075) but shares the highest ECE (0.085) with Gemma4, which is also weakest in Brier score (0.091). Notably, Gemma4 still achieves a competitive F1 of 89.30\%, showing that strong predictive performance alone does not guarantee that confidence estimates are equally trustworthy.

Overall, SEMV performs consistently across backbones, with all variants matching or exceeding the strongest comparable baseline in accuracy and three improving F1. Qwen3.6 provides the strongest profile for confidence-aware verification, combining the highest accuracy (91.88\%) and precision (93.00\%) with the best Brier score. Nemotron3 favors OOC sensitivity, achieving the highest recall (92.47\%) and F1 (91.82\%) together with the lowest ECE. InternVL3.5 offers a balanced intermediate profile, whereas Gemma4 maintains competitive predictive performance but the least reliable confidence estimates. These results indicate that SEMV is robust across heterogeneous backbones, while backbone choice determines the trade-off between sensitivity and calibration.

\subsection{Verification Report Evaluation}\label{subsec:report}
We evaluate the quality of verification reports from both real-world and controlled perspectives. We first report the official MV2026 fact-checker assessment on the private test set, then analyze held-out validation cases to examine predictive correctness, report quality, and computational efficiency across SEMV backbones.

\subsubsection{Official MV2026 Judge Evaluation}
For the real-world verification stage of MV2026 \cite{dang20262026}, we report the results of our previously submitted A-QBAF verification framework \cite{nguyen2026contestable}, which corresponds to SEMV without the self-evolving module. Let $Q_i\in[0,110]$ denote the report-quality score for case $i$. The rubric allocates $10$ points to summary and content classification, $65$ to verified evidence, $20$ to forensic analysis, $5$ to supporting evidence and cross-checking, and $10$ to clarity and structure. The official time-adjusted score is: 
\begin{equation}
\mathsf{MVScore}_i
=
\max\!\left(0,Q_i-0.001T_i^2\right),
\end{equation}
where $T_i$ is the number of hours between case release and report submission. The aggregate challenge result follows the organizer-provided aggregation over the official cases.

Our method achieves an official aggregate judge score of $714.95$, compared with $544.11$ for Team matchalatte, $461.37$ for Team dashlab, and $441.54$ for Deep Verify \cite{nguyen2026deepverify}. SEMV obtains the highest reported score on each of the ten cases, with organizer-reported case scores ranging from $59.49$ to $83.00$. This evaluation directly measures the quality of the final verification reports under the challenge's expert assessment protocol and serves as our primary real-world measure of report performance.

\begin{table}[ht]
\centering
\caption{Official MV2026 private-test judge results over ten cases. Our submitted A-QBAF system \cite{nguyen2026contestable} corresponds to SEMV without the self-evolving module. Best in \textbf{bold}.}
\label{tab:mv2026_private_judge}
\small
\setlength{\tabcolsep}{3.7pt}
\resizebox{\textwidth}{!}{
\begin{tabular}{lrrrrrrrrrr|r}
\toprule
\textbf{Method}
& \textbf{\#01}
& \textbf{\#02}
& \textbf{\#03}
& \textbf{\#04}
& \textbf{\#05}
& \textbf{\#06}
& \textbf{\#07}
& \textbf{\#08}
& \textbf{\#09}
& \textbf{\#10}
& \textbf{Total $\uparrow$} \\
\midrule

Team matchalatte
& 63.92 & 59.42 & 59.91 & 58.42 & 55.91
& 51.91 & 49.41 & 33.91 & 50.91 & 60.40
& 544.11 \\

Team dashlab
& 54.76 & 43.26 & 41.75 & 43.74 & 43.74
& 50.23 & 47.23 & 61.22 & 37.22 & 38.22
& 461.37 \\

Deep Verify \cite{nguyen2026deepverify}
& 51.73 & 43.71 & 50.71 & 32.71 & 45.70
& 58.20 & 58.20 & 44.69 & 18.69 & 37.19
& 441.54 \\
\midrule
\rowcolor{PalePurple!10}
\textbf{SEMV (Ours)}
& \textbf{83.00}
& \textbf{77.50}
& \textbf{73.50}
& \textbf{65.49}
& \textbf{67.99}
& \textbf{65.49}
& \textbf{78.49}
& \textbf{80.49}
& \textbf{59.49}
& \textbf{63.49}
& \textbf{714.95} \\
\bottomrule
\end{tabular}
}
\end{table}

\subsubsection{Held-Out Validation Evaluation}
In addition to the official MV2026 private-test evaluation, we conduct a controlled analysis on the 10 held-out MV2026 validation cases to examine differences among SEMV backbones. We evaluate predictive performance, verification report quality, and computational efficiency.

\paragraph{Verification Metrics}
We evaluate both the correctness of the final decision and the reliability of the generated verification report. 

For predictive performance, we measure case-level \emph{accuracy} and \emph{abstention rate}. Both quantities are reported as percentages across all $N=10$ cases, with abstentions counted as errors. Let $\mathcal{A}=\{\text{\textsf{uncertain}}, \text{\textsf{insufficient-evidence}}\}$ be the set of abstention labels. Metrics are defined as:
\begin{equation}
    \mathrm{Acc}=\frac{100}{N}\sum_{i=1}^{N}\mathbf{1}\!\left[\hat y_i=y_i\right],
\qquad
\mathrm{Abst}=\frac{100}{N}\sum_{i=1}^{N}\mathbf{1}\!\left[\hat y_i\in\mathcal{A}\right].
\end{equation}

For report quality, we measure \textit{Evidence-URL F1}. Let $\Urls_i$ and $\Urls_i^{*}$ denote the cited and reference URL sets for case $i$, respectively. We compute:
\begin{equation}
\mathrm{Prec}_i=\frac{|\Urls_i\cap\Urls_i^{*}|}{|\Urls_i|},
\qquad
\mathrm{Rec}_i=\frac{|\Urls_i\cap\Urls_i^{*}|}{|\Urls_i^{*}|},
\qquad
\mathrm{F1}_i=
\frac{2|\Urls_i\cap\Urls_i^{*}|}
{|\Urls_i|+|\Urls_i^{*}|},
\label{eq:url_f1}
\end{equation}
where $\mathrm{F1}_i=0$ when no URL is cited. We report the macro-average $\mathsf{URL\text{-}F1}=100N^{-1}\sum_{i=1}^{N}\mathrm{F1}_i$. Higher values indicate more precise and complete evidence retrieval and citation.

\textit{Temporal Grounding} is evaluated from two related perspectives: whether the report correctly recovers the time of the depicted event and whether it avoids confusing source publication time with event time. Let $r_i^{t}=1$ when the event time is correctly recovered under the annotation protocol, and $c_i^{t}=1$ when a publication timestamp is incorrectly presented as the event time. We compute:
\begin{equation}
\mathsf{Temp}
=
\frac{100}{N}\sum_{i=1}^{N}r_i^{t},
\qquad
\mathsf{TimeConf}
=
\frac{100}{N}\sum_{i=1}^{N}c_i^{t}.
\end{equation}
Higher $\mathsf{Temp}$ and lower $\mathsf{TimeConf}$ indicate more reliable temporal grounding.

\textit{Required-Field Coverage} measures the completeness of the generated verification report. With $M=14$ required fields and $z_{ij}=1$ when field $j$ is completed for case $i$, we compute:
\begin{equation}
\mathsf{Cov}
=
\frac{100}{NM}
\sum_{i=1}^{N}\sum_{j=1}^{M} z_{ij}.
\end{equation}

\paragraph{Computational Efficiency Metrics}
We measure computational efficiency under matched cached tool outputs and report the median per-case values. \textit{Wall-clock time} (in minutes) measures elapsed time from the start of SEMV inference to generation of the final verification output, with external tool results replayed from the same cache across backbones. \textit{Output tokens} count all tokens generated across every LLM call within a case, including both model-thinking and final-response tokens. \textit{LLM calls} count all backbone invocations made throughout the end-to-end SEMV pipeline.

\paragraph{Results}

Table~\ref{tab:mv_validation_results} shows that Qwen3.6 provides the strongest overall verification performance. It correctly verifies $9/10$ cases with no abstentions, while achieving the highest Evidence-URL F1 ($83.5\%$), the best temporal grounding ($9/10$ correctly recovered event times), the highest required-field coverage ($97.1\%$), and no publication-time/event-time confusion. Nemotron3 is the strongest alternative, correctly verifying $8/10$ cases without abstention and achieving the second-highest Evidence-URL F1 ($82.0\%$) with $96.4\%$ field coverage. InternVL3.5 also correctly verifies $8/10$ cases and recovers event time in $8/10$ cases, but abstains once and obtains slightly lower Evidence-URL F1 ($81.1\%$) and field coverage ($95.7\%$). Gemma4 shows the weakest overall verification profile, correctly verifying $7/10$ cases and attaining the lowest Evidence-URL F1 ($79.2\%$), temporal grounding ($7/10$), and tied-lowest field coverage ($95.7\%$). These results indicate that similar case-level accuracy can still be associated with meaningful differences in evidence alignment, temporal grounding, and report completeness.

For computational efficiency, Qwen3.6 achieves the lowest median wall-clock time at $4.8$ minutes per case, closely followed by Nemotron3 at $4.9$ minutes. Nemotron3, however, produces the largest generation workload, with $12.4$k output tokens and $32$ LLM calls. Gemma4 generates the fewest output tokens ($10.6$k) and, together with InternVL3.5, requires the fewest LLM calls ($30$). InternVL3.5 has the highest wall-clock time at $9.1$ minutes despite its relatively low token count ($10.9$k) and number of calls. Overall, Qwen3.6 provides the best balance between verification quality and inference latency, whereas Gemma4 and InternVL3.5 reduce generation workload but require substantially longer execution times.

\begin{table}[t]
\centering
\caption{
Results on the 10 held-out MV2026 validation cases across predictive performance, report quality, and computational efficiency. $\uparrow$/$\downarrow$ denote metrics for which higher/lower values are better. Best in \textbf{bold}, second best \underline{underlined}.
}
\label{tab:mv_validation_results}
\small

\resizebox{\textwidth}{!}{
\begin{tabular}{lccccccccc}
\toprule
& \multicolumn{2}{c}{\textbf{Pred. Perf.}}
& \multicolumn{4}{c}{\textbf{Report Quality}}
& \multicolumn{3}{c}{\textbf{Computational Efficiency}} \\
\cmidrule(lr){2-3}
\cmidrule(lr){4-7}
\cmidrule(lr){8-10}

\textbf{SEMV}
& \textbf{Acc.$\uparrow$}
& \textbf{Abst.$\downarrow$}
& \textbf{URL F1$\uparrow$}
& \textbf{Temp.$\uparrow$}
& \textbf{Conf.$\downarrow$}
& \textbf{Cov.$\uparrow$}
& \textbf{Wall$\downarrow$}
& \textbf{Tok.$\downarrow$}
& \textbf{Calls$\downarrow$} \\
\midrule

+Nemotron3
& 8/10
& \best{0/10}
& $82.0\pm0.6$
& 8/10
& 1/10
& 96.4
& \underline{4.9}
& 12.4k
& 32 \\

+Gemma4
& 7/10
& 1/10
& $79.2\pm0.9$
& 7/10
& 1/10
& 95.7
& 8.2
& \best{10.6k}
& \best{30}\\

+InternVL3.5
& 8/10
& 1/10
& $81.1\pm0.7$
& 8/10
& 1/10
& 95.7
& 9.1
& \underline{10.9k}
& \best{30}
\\

+Qwen3.6
& \best{9/10}
& \best{0/10}
& $\bm{83.5\pm0.5}$
& \best{9/10}
& \best{0/10}
& \best{97.1}
& \best{4.8}
& 11.2k
& \underline{31}
\\

\bottomrule
\end{tabular}
}
\end{table}

\subsection{Self-Evolving Memory Evaluation}\label{subsec:memory}
This section carries the paper's principal empirical claim. We evaluate \emph{consolidation} (how candidates are validated and stored), \emph{retrieval and reuse} (whether useful records are found and actually incorporated), and \emph{transfer safety} (whether enabling memory corrects errors without harming previously correct decisions).

\subsubsection{Metrics}
\paragraph{Consolidation Metrics} We evaluate how memories generated by post-prediction reflection are verified and incorporated into long-term memory. The \textit{Grounded Pass} rate is the proportion of generated candidates that pass grounding, provenance, and consistency verification (refer Eq~\ref{eq:memory_verification}). 
Among the grounded-passed candidates, the \textit{Promotion} rate measures those stored as new long-term records. The \textit{Merge} rate measures those incorporated into existing records. The \textit{Conflict Retention} rate measures those preserved as unresolved conflicts. The \textit{Rejection} rate measures the proportion of items excluded from long-term storage.

\paragraph{Reuse Metrics}
We evaluate whether relevant memories are successfully retrieved and effectively reused during reasoning for the current verification case.
\textit{Relevance} is the proportion of retrieved records that match the current claim, evidence context, and task scope.
\textit{Successful Use} is measured over cited memory-use events and denotes the proportion that pass provenance and conflict checks and are retained after argument verification because they support a correct decision or correct an otherwise erroneous one.

\paragraph{Transfer Safety Metrics}
Transfer is evaluated only on matched memory-off and memory-on runs with the same case identifiers and evaluation settings. Positive transfer (\textit{Trans}$^{+}$) is the proportion of cases that are incorrect without memory but become correct when memory is enabled. Negative transfer (\textit{Trans}$^{-}$) is the proportion of cases that are correct without memory but become incorrect with memory.

\subsubsection{Results}

\begin{table*}[t]
\centering

\begin{minipage}[t]{0.29\textwidth}
\vspace{0pt}
\centering
\captionof{table}{Memory consolidation outcomes with Qwen3.6 over 184 reflected candidates. Indented rates are conditional on grounding pass.}
\label{tab:memory_consolidation}
\small
\setlength{\tabcolsep}{5pt}
\resizebox{\linewidth}{!}{
\begin{tabular}{@{}lc@{}}
\toprule
\textbf{Metric} & \textbf{Rate (\%)} \\
\midrule
\rowcolor{gray!10}
Grounding Pass
& 61.4 \\
$\hookrightarrow$ Promotion
& 42.5 \\
$\hookrightarrow$ Merge
& 18.6 \\
$\hookrightarrow$ Conflict Retention
& 7.1 \\
$\hookrightarrow$ Rejection
& 31.8 \\
\bottomrule
\end{tabular}
}
\end{minipage}
\hfill
\begin{minipage}[t]{0.70\textwidth}
\vspace{0pt}
\centering
\captionof{table}{Memory ablation on reuse, and transfer safety with Qwen3.6 on COSMOS test set. $\mathsf{MU}=$ memory update, $\mathsf{VC}=$ verified consolidation, and $\mathsf{CR}=$ conflict retention. Transfer rates use matched memory-on/off cases. $\uparrow$/$\downarrow$ denote metrics for which higher/lower values are better. Best in \textbf{bold}, second best \underline{underlined}.}
\label{tab:memory_results}
\small
\resizebox{\linewidth}{!}{
\begin{tabular}{@{}l|c|cc|cc@{}}
\toprule
\textbf{Memory Config.}
& \textbf{Acc. $\uparrow$}
& \textbf{Rel. $\uparrow$}
& \textbf{Succ. Use $\uparrow$}
& \textbf{Trans$^{+}\uparrow$}
& \textbf{Trans$^{-}\downarrow$} \\
\midrule

$\varnothing$
& 86.06
& \NA
& \NA
& \NA
& \NA \\

$\mathcal{M}^{\mathrm{epi}}$
& 87.29
& 82.4
& 84.1
& 11.4
& \underline{0.4} \\

$\mathcal{M}^{\mathrm{epi}}+\mathsf{MU}$
& 87.59
& 68.7
& 71.3
& \best{46.4}
& 5.7 \\

$\mathcal{M}^{\mathrm{epi}}+\mathsf{MU}+\mathsf{VC}$
& \underline{90.71}
& \underline{91.2}
& \underline{85.7}
& 36.7
& 0.5 \\

\rowcolor{PalePurple!6}$\mathcal{M}^{\mathrm{epi}}+\mathsf{MU}+\mathsf{VC}+\mathsf{CR}$
& \best{91.88}
& \best{93.6}
& \best{88.5}
& \underline{43.0}
& \best{0.2} \\

\bottomrule
\end{tabular}
}
\end{minipage}

\end{table*}

\paragraph{Consolidation is selective}
Table~\ref{tab:memory_consolidation} shows that long-term memory is formed selectively rather than by directly accumulating all reflected experiences.
Of the 184 candidate records produced by post-prediction reflection, 61.4\% pass grounding, provenance, and consistency verification. Among these grounded candidates, 42.5\% are promoted as new records, indicating that a substantial fraction contributes genuinely new reusable knowledge. 18.6\% are merged into existing memories, showing that repeated experiences can reinforce previously stored knowledge without unnecessary duplication. 7.1\% are retained explicitly as unresolved conflicts, preserving contradictory evidence instead of silently overwriting prior records. The remaining 31.8\% are rejected from long-term storage, reflecting the role of consolidation in filtering candidates that are not sufficiently suitable for future reuse.

\paragraph{Verification, not Memory Volume, is what matters}
Table~\ref{tab:memory_results} isolates the mechanism.
Episodic memory alone improves accuracy from 86.06\% to 87.29\%, with 82.4\% retrieval relevance, 84.1\% successful use, and only 0.4\% negative transfer. Allowing unrestricted memory updates yields the highest positive transfer at 46.4\%, indicating that accumulating additional experience creates more opportunities to correct baseline errors. However, relevance falls to 68.7\% and successful use to 71.3\%, while negative transfer rises sharply to 5.7\%. The resulting accuracy gain is therefore limited to 87.59\%. This demonstrates that a memory can intervene frequently and correct many errors, yet still degrade overall reliability through irrelevant or harmful reuse. 

Verified consolidation substantially changes this trade-off. Adding verified consolidation raises relevance from 68.7\% to 91.2\% and successful use from 71.3\% to 85.7\%, while reducing negative transfer from 5.7\% to 0.5\%. Accuracy consequently increases to 90.71\%. Adding conflict retention further improves relevance to 93.6\% and successful use to 88.5\%, while reducing negative transfer to 0.2\%. The full memory configuration achieves the highest accuracy of 91.88\% while retaining a high positive-transfer rate of 43.0\%. Hence, its advantage does not arise from suppressing memory intervention altogether, but from preserving beneficial transfer while preventing previously correct predictions from being overturned. The full memory policy therefore achieves the best balance between acquiring reusable experience, maintaining retrieval quality, and limiting harmful transfer.

\subsection{Contestability}\label{subsec:contest}
We evaluate whether a human correction is translated into the intended system mutation, propagated through the correct causal dependencies, and applied without unnecessarily changing already-correct state.
Fig.~\ref{fig:contestation_paths} shows the three configurations under the same reviewer \emph{edit} action. 
Unstructured feedback ($\mathsf{U}$) accepts free-form input and performs a global rerun. 
Causal contestation routing ($\mathsf{C}$) converts feedback into a typed action, identifies the earliest affected stage $u^{*}$, and reruns from that point downstream.
Scoped causal revision ($\mathsf{C{+}S}$) additionally computes the affected dependency closure $\Objs^{*}$, restricting recomputation to states that can causally depend on the contested object.

\begin{figure}[ht]
    \centering
    \includegraphics[width=.95\linewidth]{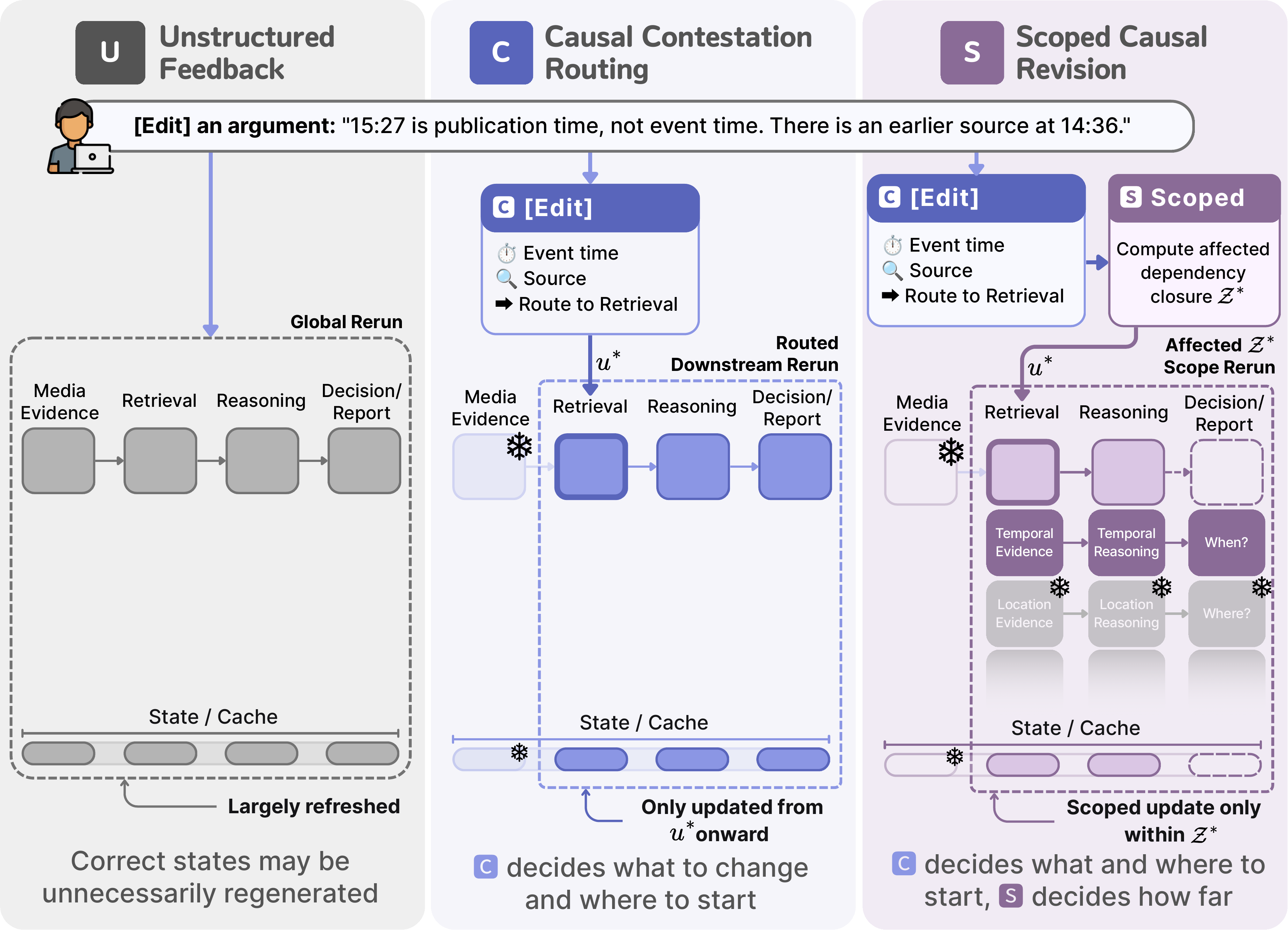}
    \caption{Contestation paths under $\mathsf{U}$, $\mathsf{C}$, and $\mathsf{C{+}S}$. Given the same human edit, $\mathsf{U}$ triggers a global rerun; $\mathsf{C}$ identifies the target and causal restart point $u^{*}$; $\mathsf{C{+}S}$ further restricts recomputation to the affected dependency closure $\Objs^{*}$, preserving unrelated cached states.}
    \label{fig:contestation_paths}
\end{figure}

\subsubsection{Quantitative Evaluation}
CTR comprises $N=50$ matched contestation episodes from 10 source cases, with the same adjudicated feedback replayed under each configuration. Of these, $N_H=40$ contain an injected report or state corruption and are used for corruption-specific process and recovery metrics. At the decision level, 30 episodes are initially erroneous and 20 initially correct. For episode $i$, let $s_i^{(0)}$, $s_i^{(H)}$, and $s_i^{*}$ denote the state before contestation, after contestation, and the adjudicated target; $\Objs_i$ the mutable state objects and $\Objs_i^{*}$ the gold dependency closure. All configurations share backbone, evidence pool, initial state, and human correction.

\paragraph{Contestation Fidelity Metrics}
To assess whether human feedback is faithfully executed through the contestation process, we propose the following metrics to measure whether the requested change is applied to the correct target, propagated through the intended causal scope, localized without altering unaffected states, and recorded in a complete revision trace.

\textit{Apply success} measures whether the requested mutation $\mu_i^{*}$ is committed to its intended target object $z_i^{*}$, while \textit{revision fidelity} additionally requires the recomputed state to match the adjudicated state within the gold scope:
\begin{equation}
\mathsf{Apply}=\frac{1}{N_H}\sum_{i=1}^{N_H}
\mathbf{1}\!\left[s^{(H)}_{i,z_i^{*}}=\mu_i^{*}\!\left(s^{(0)}_{i,z_i^{*}}\right)\right],
\qquad
\mathsf{Revision}=\frac{1}{N_H}\sum_{i=1}^{N_H}
\mathbf{1}\!\left[s_i^{(H)}\big|_{\Objs_i^{*}}=s_i^{*}\big|_{\Objs_i^{*}}\right].
\label{eq:ctr_apply_revision}
\end{equation}

\textit{Localization} applies Def.~\ref{def:locality} across the $N_H$ corrupted episodes, micro-averaged over out-of-scope objects.
Whereas \textit{revision fidelity} asks whether states \emph{inside} the scope are correct, \textit{localization} asks whether states \emph{outside} it are preserved:
\begin{equation}
\mathsf{Localization}
=
\frac{
\displaystyle
\sum_{i=1}^{N_H}
\sum_{z\in\mathcal Z_i\setminus\mathcal Z_i^*}
\mathbf{1}\!\left[
z\!\left(s_i^{(H)}\right)
=
z\!\left(s_i^{(0)}\right)
\right]
}{
\displaystyle
\sum_{i=1}^{N_H}
\left|
\mathcal Z_i\setminus\mathcal Z_i^*
\right|
}.
\end{equation}

\textit{Trace Completeness} is evaluated over all \(N\) episodes. \(\mathcal K\) denotes the seven required trace elements: the requested action, target, rationale, routing decision, affected scope, before-after object identifiers, and resulting decision delta. Let \(h_{ik}=1\) if trace element \(k\in\mathcal K\) is present in episode \(i\), and \(0\) otherwise. The micro-average trace completeness is:
\begin{equation}
\mathsf{Trace}
=
\frac{
\displaystyle
\sum_{i=1}^{N}
\sum_{k\in\mathcal K}
h_{ik}
}{
N|\mathcal K|
}.
\label{eq:ctr_trace}
\end{equation}

Because the 50 CTR episodes derive from only 10 source cases, we estimate uncertainty using a nonparametric cluster bootstrap at the source-case level. In each of 10,000 bootstrap replicates, 10 source cases are sampled with replacement together with all associated episode-level outcomes, and the metric is recomputed from these stored results. The 2.5th and 97.5th percentiles define the 95\% confidence interval.

\paragraph{Contestation Effectiveness Metrics}
To assess whether contestation improves the resulting verification output, we measure whether corrupted report content is recovered, erroneous decisions are corrected, initially correct decisions are preserved, and decision changes correspond to genuine corrections.

\textit{Correction} measures the fraction of initially erroneous decisions that become correct after contestation, while \textit{Induced Error} captures the opposite failure mode by measuring how often contestation damages an initially correct decision, as:
\begin{equation}
\mathsf{Correction}
=
\frac{
\displaystyle
\sum_{i=1}^{N}
\mathbf{1}\!\left[
\hat y_i^{(0)}\neq y_i
\land
\hat y_i^{(H)}=y_i
\right]
}{
\displaystyle
\sum_{i=1}^{N}
\mathbf{1}\!\left[
\hat y_i^{(0)}\neq y_i
\right]
},
\quad
\mathsf{InducedError}
=
\frac{
\displaystyle
\sum_{i=1}^{N}
\mathbf{1}\!\left[
\hat y_i^{(0)}=y_i
\land
\hat y_i^{(H)}\neq y_i
\right]
}{
\displaystyle
\sum_{i=1}^{N}
\mathbf{1}\!\left[
\hat y_i^{(0)}=y_i
\right]
}.
\end{equation}

\textit{Decision-Change Precision} (\(\mathsf{DCP}\)) evaluates whether decision changes are justified by successful error correction rather than unnecessary or harmful label changes. It is the proportion of all changed decisions that correct an initially erroneous prediction:
\begin{equation}
\mathsf{DCP}
=
\frac{
\displaystyle
\sum_{i=1}^{N}
\mathbf{1}\!\left[
\hat y_i^{(0)}\neq y_i
\land
\hat y_i^{(H)}=y_i
\right]
}{
\displaystyle
\sum_{i=1}^{N}
\mathbf{1}\!\left[
\hat y_i^{(H)}
\neq
\hat y_i^{(0)}
\right]
}.
\label{eq:decision_change_precision}
\end{equation}

\textit{Report Recovery} measures whether corrupted report fields are restored to their adjudicated values after contestation. Let \(\mathcal G_i\) denote the set of corrupted report fields in episode \(i\), \(r_{if}^{(H)}\) the revised value of field \(f\), and \(r_{if}^*\) its adjudicated accepted value. Report Recovery is:
\begin{equation}
\mathsf{RepRec}
=
\frac{
\displaystyle
\sum_{i=1}^{N_H}
\sum_{f\in\mathcal G_i}
\mathbf{1}\!\left[
r_{if}^{(H)}
\equiv
r_{if}^*
\right]
}{
\displaystyle
\sum_{i=1}^{N_H} |\mathcal G_i|
},
\label{eq:report_recovery}
\end{equation}
where \(\equiv\) denotes adjudicated semantic equivalence rather than exact string identity.

\paragraph{Computational Cost Metrics}
Beyond producing correct revisions, an effective contestation mechanism should avoid unnecessary downstream recomputation and efficiently return the revised result. We therefore measure system turnaround after feedback submission, the fraction of downstream transitions avoided, and the resulting computation saved relative to global recomputation.

\textit{Revision Time} measures the system turnaround after human feedback is submitted. For configuration \(q\), the time is measured from intervention submission until the revised state and report become available and is reported as the median across all \(N\) matched episodes. Human inspection and feedback-entry time are excluded to isolate the post-submission revision cost.

\textit{Recomputation Avoided} measures the proportion of downstream state transitions skipped relative to the global-rerun configuration \(\mathsf{U}\). For each episode, we compare the transitions re-executed under configuration \(q\) with those required by global recomputation and report the median percentage avoided over all matched episodes.

\textit{Compute Saved} measures the reduction in model-execution cost relative to global recomputation. For each episode, the compute cost under configuration \(q\) is compared with that under \(\mathsf{U}\), using the same backbone and hardware, and the median percentage saved is reported over all matched episodes.

\begin{figure*}[t]
\centering

\begin{minipage}[t]{0.45\textwidth}
\vspace{0pt}
\centering
\captionof{figure}{
Process-fidelity results with Qwen 3.6.
}
\includegraphics[width=\linewidth]{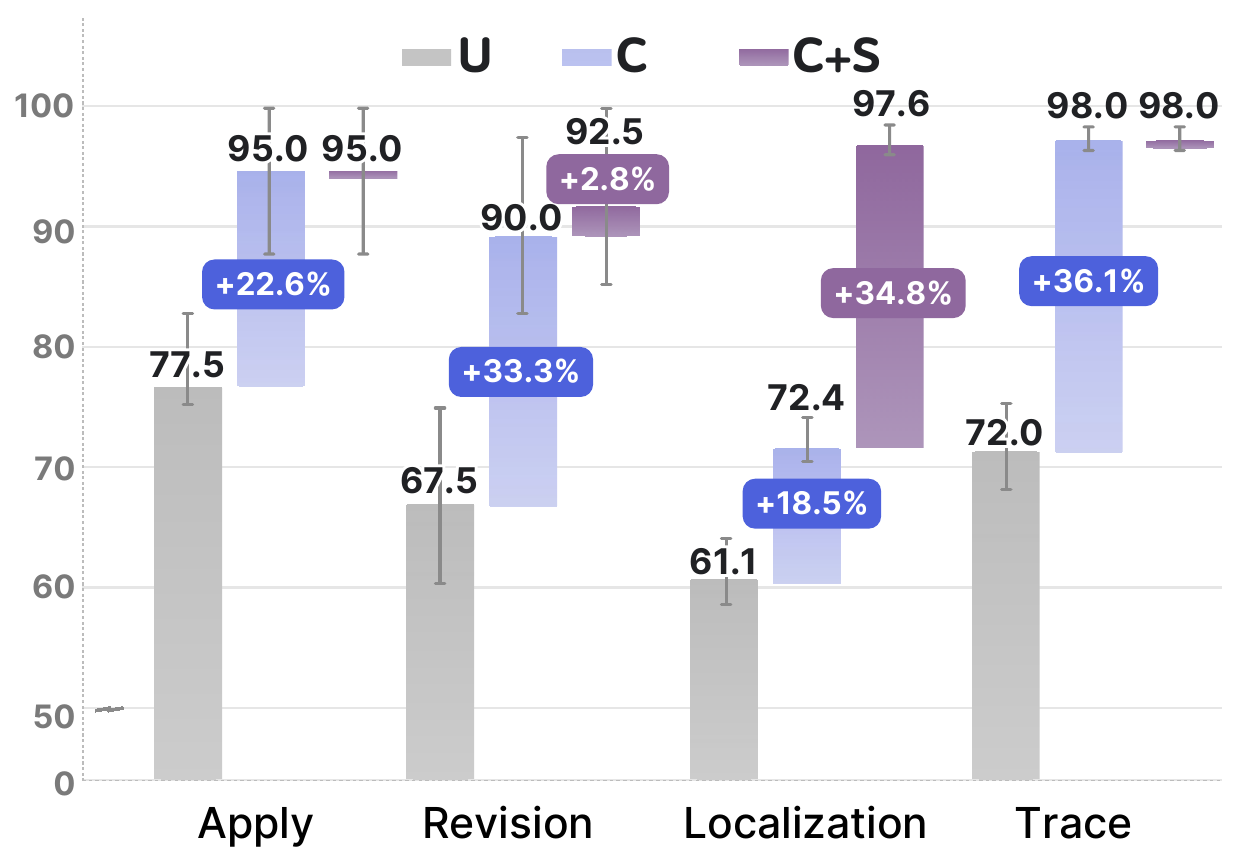}

\label{fig:ctr_fidelity}
\end{minipage}
\hfill
\begin{minipage}[t]{0.54\textwidth}
\vspace{0pt}
\centering

\captionof{table}{
Contestation effectiveness and computational cost ablation with Qwen 3.6. $\uparrow$/$\downarrow$ denote metrics for which higher/lower values are better. Best in \textbf{bold}.
}
\label{tab:ctr_repair_results}

\small
\setlength{\tabcolsep}{6pt}
\renewcommand{\arraystretch}{1.12}
\resizebox{\linewidth}{!}{
\begin{tabular}{lrr>{\columncolor{PalePurple!6}}r>{\columncolor{PalePurpleRed!15}}r}
\toprule
\textbf{Metric} & & $\mathsf{U}$ & $\mathsf{C}$ & $\mathsf{C{+}S}$ \\
\midrule
\multicolumn{5}{l}{\textit{Contestation effectiveness}} \\
\quad Correction$\uparrow$ & $/30$ & 25 & 28 & \best{29} \\
\quad Induced error$\downarrow$ & $/20$ & 2 & 1 & \best{0} \\
\quad DCP (\%)$\uparrow$ & & 92.6 & 96.6 & \best{100.0} \\
\quad Report recovery (\%)$\uparrow$ & & 70.0 & \best{90.0} & \best{90.0} \\
\midrule
\multicolumn{5}{l}{\textit{Computational cost}} \\
\quad Revision time (min)$\downarrow$ & & 4.80 & 3.76 & \best{2.27} \\
\quad Recomputation avoided (\%)$\uparrow$ & & 0.0 & 30.8 & \best{68.4} \\
\quad Compute saved (\%)$\uparrow$ & & 0.0 & 21.7 & \best{52.8} \\
\bottomrule

\end{tabular}
}
\end{minipage}

\end{figure*}

\paragraph{Results}
Relative to unstructured feedback, $\mathsf{C}$ raises Apply from $77.5\%$ to $95.0\%$ ($31/40$ to $38/40$), Revision from $67.5\%$ to $90.0\%$ ($27/40$ to $36/40$), and Trace from $72.0\%$ to $98.0\%$.
Apply shifts from $[75.0,82.5]$ under $\mathsf{U}$ to $[87.5,100]$ under both structured variants, Revision from $[60.0,75.0]$ to $[82.5,97.5]$ under $\mathsf{C}$ and $[85.0,100]$ under $\mathsf{C+S}$, and Trace from $[68.6,75.4]$ to $[97.1,98.9]$.
Structuring the intervention and selecting a causal restart point therefore produces a measurable and statistically supported improvement in whether the reviewer's intent is actually executed. The gain in Localization is smaller ($61.1\%$ to $72.4\%$) because everything downstream of $u^{*}$ remains eligible for regeneration.

The same pattern appears at the decision and report levels. Correction increases from 83.3\% under \(\mathsf{U}\) to 93.3\% with \(\mathsf{C}\) and 96.7\% with \(\mathsf{C+S}\), while Induced Error decreases from 10.0\% to 5.0\% and then 0.0\%. Accordingly, DCP reaches 100.0\% under \(\mathsf{C+S}\), meaning that every observed decision change corrects an initially erroneous decision. Report Recovery increases from 70.0\% to 90.0\% when causal routing is introduced and remains at 90.0\% after adding scoping. Thus, the observed report-recovery gain is associated primarily with causal routing, whereas scoped revision contributes more strongly to state preservation and computational efficiency.

The reduction in recomputation is also reflected in revision cost. Compared with global rerunning, \(\mathsf{C}\) avoids 30.8\% of downstream recomputation and saves 21.7\% of execution compute, reducing median Revision Time from 4.80 to 3.76 minutes. With \(\mathsf{C+S}\), Recomputation Avoided increases to 68.4\% and Compute Saved to 52.8\%, while median Revision Time decreases further to 2.27 minutes. Together with the increase in Localization from 72.4\% to 97.6\%, these results indicate that scoped causal revision improves efficiency by reusing unaffected state while maintaining, and slightly improving, revision and decision-repair performance.

\subsubsection{Qualitative Trace Evaluation}
In this section, we qualitatively examine how human contestation and self-evolving memory improve multimedia verification beyond the final verdict. The analysis compares the verification state of a specific CTR episode, before and after contestation, focusing on whether challenged claims by reviewers are correctly routed, revalidated, and revised without unnecessarily altering well-supported evidence. We further examine whether validated corrections can be consolidated into a reusable verification experience that improves the reasoning procedure of future cases without introducing case-specific factual knowledge.

\begin{tcolorbox}[
  enhanced,
  colback=gray!2,
  colframe=gray!40,
  boxrule=0.4pt,
  arc=1pt,
  left=4pt,
  right=4pt,
  top=4pt,
  bottom=4pt
]
\textbf{Media Input.}
{\small
The case input comprises two videos with representative key frames:
}

\medskip

\noindent
{\footnotesize [Video 1] Duration \(36.0\,\mathrm{s}\), resolution \(464\times848\)}

\smallskip

\noindent
\mediaframe{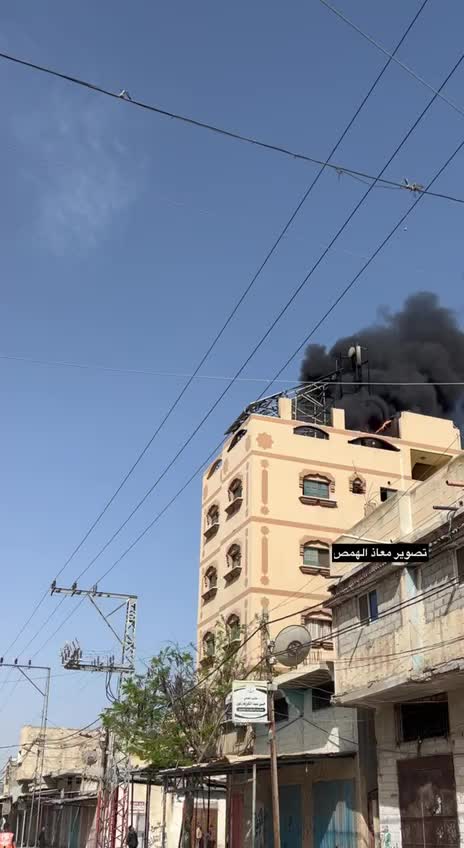}{4\,\mathrm{s}}
\keyframegap
\mediaframe{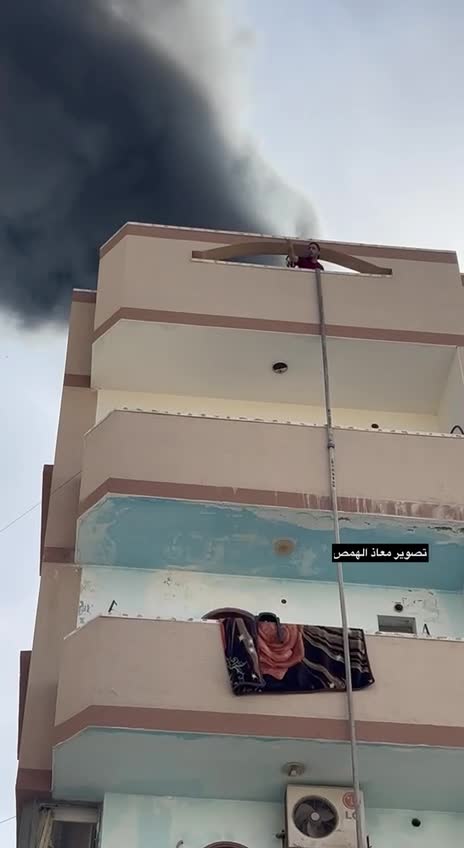}{15\,\mathrm{s}}
\keyframegap
\mediaframe{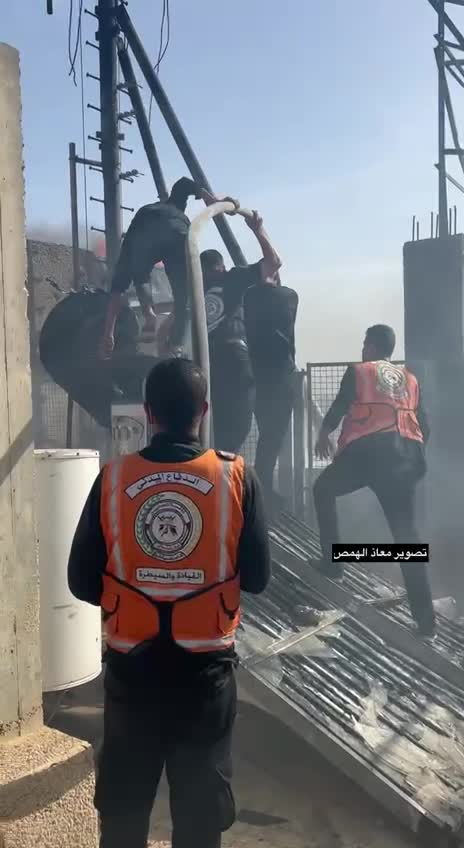}{29\,\mathrm{s}}

\medskip

\noindent
{\footnotesize [Video 2] Duration \(26.5\,\mathrm{s}\), resolution \(368\times400\)}

\smallskip

\noindent
\mediaframe{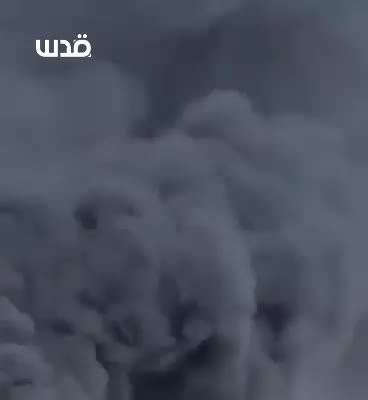}{3\,\mathrm{s}}
\keyframegap
\mediaframe{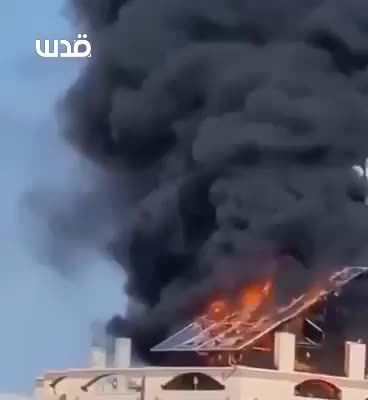}{12\,\mathrm{s}}
\keyframegap
\mediaframe{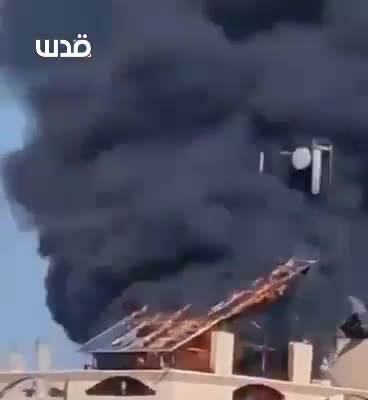}{22\,\mathrm{s}}

\end{tcolorbox}

\begin{pairedreport}[
Initial Report: Before Human Contestation and Memory Reuse
]{SEMVNavy}

\textbf{Case Summary.}
The videos show black smoke and fire at a light-yellow building in central
Rafah. The candidate coordinates are $31.27805,34.251062$, and the visual
content is broadly consistent with this location.

\textbf{Evidence and Provenance.}
The longer Telegram video appears on Moaz Al-Hams's channel at 15:27 on May 10 and provides the clearest view of the building and surrounding street. A shorter version appears to have circulated earlier on X during the same afternoon, but the first pass does not establish whether it is the earliest located publication. Both videos appear to depict the same fire.

\textbf{Where?}
The coordinates supplied by an external geolocation account are plausible.
The light-yellow facade, rooftop structure, decorative vertical lines, and
nearby power lines are broadly consistent with the proposed location in
Rafah.

\textbf{When?}
The fire probably occurred on May 10 before the longer video was published
at 15:27. The exact ignition time cannot be recovered because the available
social-media files contain no useful original timestamp metadata.

\textbf{Who and Why?}
The longer video is most likely associated with photographer Moaz Al-Hams.
An orange reflective vest visible during the response appears consistent
with Civil Defense markings. The fire appears concentrated around a rooftop
structure that some social-media posts describe as communications equipment,
although the ignition cause and responsible actor are not established.

\textbf{Verdict.}
The occurrence of the fire and its approximate location in central Rafah
are supported. The timing is provisionally placed on May 10 before 15:27,
while the cause remains unresolved.

\end{pairedreport}

\begin{humancontestbox}

\haction{REJECT}
\textbf{When (existing temporal argument $a_t^{(0)}$):}\enspace\textit{``The May 10 publication timestamp does not support the claim that the fire occurred on May 10. It only establishes a latest-known bound for the depicted event.''}

\smallskip

\haction{ADD}
\textbf{When (temporal support $a_t^{+}$; E2, E4):}\enspace \textit{``The building is still standing in the March 9 reference footage, and the earliest located fire publication is May 10 at 14:36. The event onset should therefore be bounded to after March 9 and no later than May 10 at 14:36.''}

\smallskip

\haction{EDIT}
\textbf{Where (geolocation support $a_g^{+}$; E3, E4):}\enspace\textit{``Treat the external coordinates as a candidate, not as independent proof. Revalidate them by matching the rooftop structure, vertical facade details, and power lines across the fire footage, March 9 reference footage, and satellite view.''}

\smallskip

\haction{ADD}
\textbf{Why (causal attack $a_c^{-}$):}\enspace
\textit{``The rooftop structure, black smoke, and Civil Defense responder vest do not establish the function of the rooftop equipment, the ignition cause, or the responsible actor.''}

\end{humancontestbox}

\begin{pairedreport}[
Final Report: After Human Contestation with Prior Consolidated Memory
]{SEMVTeal}

\noindent
\begin{tcolorbox}[
  enhanced,
  colback=white,
  colframe=gray!45,
  boxrule=0.4pt,
  arc=1pt,
  left=3pt,
  right=3pt,
  top=2pt,
  bottom=2pt
]
\centering
\scriptsize\sffamily
\colorbox{PaleBlue}{
  \textcolor{SEMVNavy}{\bfseries [C] Evidence Revalidation}}
\hspace{0.6em}
\colorbox{DraftRed!6}{
  \textcolor{DraftRed}{\bfseries [E] Human Contestation}}
\hspace{0.6em}
\colorbox{PaleGreen}{
  \textcolor{SEMVTeal}{\bfseries [F$\rightarrow$C] Memory Reuse}}
\end{tcolorbox}

\textbf{Case Summary.}
The videos show black smoke and fire at a light-yellow building in central Rafah. The occurrence and location can be verified, whereas the precise event onset and cause remain unresolved.

\textbf{Evidence and Provenance.}

\retrievalblock{
The source publication times are explicitly separated from event time. Quds News Network posted the shorter video at 14:36 on May 10 (E2), whereas the longer and more detailed video appears on Moaz Al-Hams's Telegram channel at 15:27 (E1). The latter is retained for visual and geolocation analysis because it contains more reference points, but it is no longer treated as the earliest located publication.
}

\textbf{Where?}

\retrievalblock{
The coordinates $31.27805,34.251062$ from the external geolocation post (E3) are treated as a candidate and independently checked. The rooftop structure, vertical stucco lines and circles, facade color, and power lines are matched across the fire video, satellite imagery, and the March 9 Al Jazeera street footage (E4).
}

\contestblock{
The burning building is distinguished from the nearby Al-Masri (Egyptian) Tower. The March 9 tower footage is used only as a reference viewpoint from which the same yellow building and surrounding landmarks can be identified, rather than as direct evidence about the later fire.
}

\textbf{Verification Method.}

\memoryblock{
A previously consolidated temporal-bounding rule from the frozen memory snapshot is made available to Module C: when original recording metadata is unavailable, a publication timestamp supplies an upper bound rather than the event time itself. In the replay, this changes the temporal verification plan from assigning a date to constructing an interval, prompting separate checks for a reliable pre-event observation and the earliest located fire publication. The memory changes the procedure only; it supplies no factual evidence about ID408.
}

\textbf{When?}

\retrievalblock{
The building is visible standing in the March 9 reference footage (E4), and the earliest located fire video was posted at 14:36 on May 10 (E2). Reverse-image searches found no earlier publication of either fire video, while the available social-media files contain no usable original recording timestamp metadata.
}

\contestblock{
The onset of the depicted fire is therefore bounded to after the March 9 reference footage and no later than May 10 at 14:36. The first-pass inference that the fire probably occurred on May 10 is withdrawn because a publication timestamp does not establish that the event began on the same date.
}

\textbf{Who?}
The longer video is most likely associated with photographer Moaz Al-Hams (E1), and an orange reflective vest visible during the response is consistent with Civil Defense markings after reverse-image comparison.

\textbf{Why?}

\contestblock{
The rooftop structure and responder presence are not treated as causal evidence. Descriptions of the rooftop structure as communications equipment are not sufficient to establish its function, an ignition mechanism, a deliberate target, or a responsible actor. The ignition cause and responsible actor therefore remain unresolved.
}

\textbf{Verdict.}

\contestblock{
The occurrence of the fire and its location at the specified coordinates in Rafah are verified. The event onset is bounded rather than assigned an unsupported exact date or time, while the ignition cause and responsible actor remain unresolved.
}

\medskip
\textbf{Sources.}
\begin{enumerate}[
  leftmargin=2.7em,
  label={\textbf{[E\arabic*]}},
  itemsep=1pt,
  topsep=2pt
]
\item Moaz Al-Hams, Telegram post:
\url{https://t.me/moathalhams/68406}

\item Quds News Network, X post:
\url{https://x.com/QudsNen/status/1788910996481814941}

\item GeoConfirmed-linked geolocation post:
\url{https://x.com/NemoAnno/status/1788950384066621534}

\item Al Jazeera, March 9 reference footage:
\url{https://www.facebook.com/watch/?v=371890269021343}
\end{enumerate}

\end{pairedreport}

\paragraph{Results}
The comparison shows that contestation primarily improves the provenance and defensibility of the reasoning rather than changing the overall verdict.

Despite reaching a broadly correct conclusion, the initial report has three weaknesses. It conflates a social-media video's publication time with the event time, relies too heavily on externally supplied coordinates, and risks extending observable cues such as rooftop structure, smoke, and responder markings into unsupported claims about equipment function, ignition mechanism, or responsibility.
Human contestation targets these weaknesses without discarding the rest of the report. The temporal challenge redirects verification from assigning a single event date to establishing a defensible interval. The geolocation challenge treats the supplied coordinates only as candidates and revalidates them through cross-source correspondences involving facade geometry, rooftop structure, power lines, satellite imagery, and earlier street-level footage. The causal challenge removes unsupported explanations while preserving the underlying visual observations. The revised report consequently treats the \texttt{14:36} publication as the earliest located post-event evidence while retaining the \texttt{15:27} video for its richer visual content, independently corroborates the location, and bounds the event between earlier footage showing the building still standing and the earliest located fire publication. The exact onset, ignition cause, and responsible actor remain unresolved where evidence is insufficient.
Scoped revision also preserves evidence and conclusions that were not challenged, showing that contestability performs \textit{targeted repair} rather than unrestricted regeneration. Prior consolidated memory contributes a reusable rule that publication timestamps provide temporal bounds when original recording metadata is unavailable, guiding separate searches for pre- and post-event evidence without supplying case-specific facts. The corrected reasoning can then be consolidated as future verification experience.

Overall, the case shows how human contestation, causal routing, scoped revision, and memory reuse jointly repair the evidence-argument-decision chain, yielding a report that is better grounded, appropriately uncertain, and easier to audit.

\section{Discussion}\label{sec:discussion}
\subsection{Main Findings}
\paragraph{Argument-centered verification improves more than the final label}
Across the full COSMOS test set, all four SEMV backbones match or exceed the strongest directly comparable baseline in accuracy, while the strongest variants also improve F1 and provide useful confidence estimates (Sec.~\ref{subsec:predictive}). The held-out MV2026 analysis further shows that models with similar case-level accuracy can differ in evidence alignment, temporal grounding, and report completeness (Sec.~\ref{subsec:report}). These results support the central design choice of treating provenance-bearing arguments as operational system state rather than as post-hoc explanations, because the same representation supports prediction, confidence estimation, report construction, audit, and later revision.

\paragraph{Verified memory is more important than memory volume}
The memory ablation isolates the main self-evolution result. Unrestricted memory update produces the highest positive transfer (46.4\%) but also increases negative transfer to 5.7\%, leaving accuracy at 87.59\%. Verified consolidation raises retrieval relevance and successful use while reducing negative transfer to 0.5\%, and retaining unresolved conflicts reduces it further to 0.2\% while reaching 91.88\% accuracy (Table~\ref{tab:memory_results}). Thus, self-evolution is better viewed as a constrained knowledge-selection problem than as continual accumulation: useful experience must be grounded, transferable, and prevented from silently overriding contradictory evidence. This empirically supports the concern that unverified agent memory can amplify locally plausible but non-transferable experience \cite{xiong2026memory,zhang2026useful}.

\subsection{Limitations}
The evaluation has two principal scope limitations. First, COSMOS is a controlled binary image--caption consistency benchmark, whereas open-world verification additionally requires source discovery, temporal and geographic reasoning, and heterogeneous media analysis. MV2026 exercises these capabilities but contains only 10 held-out validation cases and 10 private-test cases. CTR contains 50 matched episodes but only 10 independent source cases; the case-clustered bootstrap accounts for within-case dependence, yet the small number of source cases still limits population-level inference. Moreover, CTR uses adjudicated corruptions and replayed contestations rather than longitudinal interactions with independent professional fact-checkers. The official MV2026 score also evaluates the submitted A-QBAF predecessor rather than the complete self-evolving SEMV system.

Second, the full COSMOS conditions are single-run evaluations, with three-seed stability measured only on a 500-case subset. Paired experiments replay a fixed retrieval cache, which improves cross-condition comparability but does not capture the volatility of live search engines, changing webpages, or regional source availability. The sensitivity study varies one factor group at a time in a local neighborhood, and the present results cover four VLM backbones on one hardware platform. Broader multilingual benchmarks, live-retrieval replication, adversarial memory-poisoning tests, and larger human-contestation studies are therefore required before drawing deployment-level conclusions.

\section{Conclusion}
SEMV frames multimedia verification as an auditable evidence--argument--decision process that supports human revision and verified self-evolution. It combines provenance-constrained A-QBAF reasoning, causal contestation, scoped revision, and verification-gated memory consolidation. Results on COSMOS, CTR, and MV2026 show competitive verification performance, safer knowledge reuse, more faithful corrections, and lower recomputation. Overall, the findings suggest that self-evolving verification is most effective when adaptation remains traceable, contestable, and separate from factual evidence. Future work should evaluate SEMV on broader multilingual settings, live retrieval, larger human-contestation studies, and stronger attacks on memory and provenance.

\paragraph{Ethical Consideration}
Automated multimedia verification is inherently dual-use: false \emph{false-context} decisions may suppress accurate reporting, while false \emph{verified} decisions may legitimize misinformation. SEMV therefore uses abstention, human contestation, and provenance-constrained memory, and is intended for decision support rather than autonomous moderation. Low-confidence cases can be escalated through \emph{uncertain} or \emph{insufficient-evidence} outcomes (Sub.~\ref{subsec:predictive}).
Each argument must cite resolvable evidence (Def.~\ref{def:admissibility}), while consolidated memory is kept separate from evidence and cannot independently establish factual claims. The verification gate in Eq.~\ref{eq:memory_verification} further mitigates, but does not eliminate, memory-poisoning risk. As MV2026 and CTR may contain identifiable individuals, we report only verification-level outcomes and release no derived identity information. Performance may also vary across languages, regions, and source ecosystems due to uneven retrieval coverage.

\section*{Acknowledgments}
\paragraph{Funding}
This work is supported by NSERC Discovery Grant No RGPIN-2025-04478 and NSERC Discovery Supplement Award No DGECR-2025-00129.

\paragraph{Competing interests}
The author(s) declare(s) that there is no conflict of interest regarding the publication of this article.

\paragraph{Author contributions}
T. T. H. Nguyen conceived the study, designed the SEMV framework, led the implementation and experimental design, analyzed the results, and drafted the manuscript. V. T. K. Nguyen contributed to system design, implementation, experimental evaluation, and manuscript revision. H.-L. Cao contributed to the argumentation methodology, experimental design, and manuscript revision. P. Ho contributed to argumentation methodology and manuscript revision. T. T. Nguyen contributed to data processing, implementation, and manuscript revision. V. Pham contributed to the implementation. H. Cao supervised the research, acquired funding, and revised the manuscript. All authors reviewed and approved the final manuscript.

\section*{Data Availability}
COSMOS \cite{aneja2023cosmos} is available through the official project at \url{https://github.com/shivangi-aneja/COSMOS}, subject to its research-use access agreement; the original data cannot be redistributed by the authors. MV2026 \cite{dang20262026} is distributed by the organizers through the official challenge website at \url{https://sites.google.com/view/mv2026}, where access is administered by the organizers. Accordingly, the original MV2026 multimedia cases are not redistributed with this article. Our reviewer-based CTR dataset is available upon reasonable request, excluding the underlying MV2026 multimedia content. 

\section*{Supplementary Materials}
Materials and Methods: notation and formal definitions; functional roles of SEMV components; software and tool configuration; configuration constants; three-run stability protocol; and configuration-sensitivity protocol.

\noindent Fig.~\ref{fig:configuration_sensitivity}: Sensitivity to configuration constants on the COSMOS validation sensitivity subset.

\noindent Tables~\ref{tab:notation}--\ref{tab:sensitivity}: principal SEMV notation; functional roles; tool inventory and software versions; configuration constants; three-run stability evaluation; and sensitivity to configuration constants.

\printbibliography

\newpage
\appendix
\setcounter{table}{0}
\renewcommand{\thetable}{S\arabic{table}}

\setcounter{figure}{0}
\renewcommand{\thefigure}{S\arabic{figure}}

\setcounter{equation}{0}
\renewcommand{\theequation}{S\arabic{equation}}

\section*{Supplementary Materials}

\setcounter{subsection}{0}
\renewcommand{\thesubsection}{A\arabic{subsection}}

\subsection{Notation}
\label{app:notation}

Table~\ref{tab:notation} summarizes the principal notation used in the SEMV formulation. Symbols are grouped by their roles in case representation and output, evidence and arguments, multi-agent state, A-QBAF reasoning, human contestation, and verified memory consolidation.

\begin{table}[h]
\centering
\caption{Principal notation used in SEMV, grouped by case and output, evidence and arguments, multi-agent state, A-QBAF reasoning, human contestation, and verified memory consolidation.}
\label{tab:notation}
\footnotesize
\setlength{\tabcolsep}{5pt}
\renewcommand{\arraystretch}{1.02}
\resizebox{\textwidth}{!}{
\begin{tabular}{@{}lll@{}}
\toprule
\textbf{Symbol} & \textbf{Type} & \textbf{Meaning} \\
\midrule

\rowcolor{gray!6}
\multicolumn{3}{@{}l}{\textit{Case and output}}\\
$x$, $x_i$
& instance
& verification instance; $i$ indexes cases \\

$\mathcal{V}_x=\{m_j\}_{j=1}^{n_x}$
& set
& input images/videos of case $x$ \\

$c$
& text
& main textual claim \\

$c_k$, $K$
& subclaim, int
& scoped subclaim $k=1,\dots,K$; number of subclaims \\

$z$
& record
& optional contextual metadata \\

$\delta_{\mathrm{data}}$
& adapter
& dataset adapter and output contract \\

$\hat y$, $\mathbf{p}$, $\gamma$
& label, dist., scalar
& predicted label, class distribution, confidence \\

$\mathcal{J}$, $\mathcal{T}$, $\xi$
& records
& justification, auditable trace, output uncertainty codes \\

\midrule
\rowcolor{gray!6}
\multicolumn{3}{@{}l}{\textit{Evidence and arguments}}\\

$\mathcal{E}_t$, $e$
& set, item
& evidence at logical step $t$; an evidence item \\

$r_e$, $q_e$, $\rho_e$, $\lambda_e$, $\upsilon_e$
& fields
& reliability, relevance, provenance, location, uncertainty \\

$\mathcal{G}_t$, $a$
& set, item
& arguments in state; a single argument \\

$\sigma_a$, $t_a$, $w_a$, $v_a$, $\upsilon_a$
& fields
& stance, text, strength, verifier outcome, uncertainty \\

$E(a)$, $M(a)$
& id sets
& cited evidence identifiers; cited memory identifiers \\

\midrule
\rowcolor{gray!6}
\multicolumn{3}{@{}l}{\textit{Agents and state}}\\

$\mathcal{A}=\{A_1,\dots,A_n\}$
& set
& functional agents; $A_i$ distinguishes agents from argument $a$ \\

$\mathcal{S}$, $s$, $s_t$
& space, state
& case-state space; a committed state \\

$\mathcal{U}_i$, $u$
& set, action
& action set of agent $i$; an agent action \\

$\mathcal{L}$, $\mathcal{P}$, $T$, $\mathcal{I}$
& set, graph, function, set
& messages, dependency graph, transition, invariants \\

$\Omega_i$, $\pi_i$
& functions
& local observation and agent policy \\

\midrule
\rowcolor{gray!6}
\multicolumn{3}{@{}l}{\textit{Argumentation}}\\

$Q_k=\langle V_k,R_k^{+},R_k^{-},\beta_k\rangle$
& QBAF
& nodes, support/attack edges, and base-score function \\

$S_k$, $A_k$, $\Delta_k$
& scalars
& total support, total attack, net argumentative effect \\

$h(\cdot)$
& function
& saturating influence function (Eq.~\ref{eq:qbaf_propagation}) \\

$s_k$, $\bar{s}$, $d_k$
& scalars, label
& subclaim score, mean subclaim score, subclaim decision \\

\midrule
\rowcolor{gray!6}
\multicolumn{3}{@{}l}{\textit{Contestation}}\\

$h$, $H$
& action, batch
& human review action; non-empty review batch \\

$\mathcal{Z}$, $\mathcal{Z}_H$
& sets
& observable state objects; intended dependency scope of $H$ \\

$u_H$
& stage
& earliest stage required by review batch $H$ \\

$\mathsf{Loc}(H)$
& scalar
& contestation locality (Eq.~\ref{eq:contestation_locality}) \\

\midrule
\rowcolor{gray!6}
\multicolumn{3}{@{}l}{\textit{Memory}}\\

$m$, $\tau$
& set, record, type
& memory record; memory type \\

$u$, $c_u$
& candidate, scalar
& reflection candidate and its confidence \\

$E_u$, $A_u$
& id sets
& grounding evidence and argument identifiers \\

$\operatorname{Ind}(C)$
& set
& independent subset selected from cluster $C$ \\

$\alpha_m$, $\beta_m$, $c(m)$, $r(m)$
& scalars
& lifecycle parameters, record confidence, and conflict ratio \\

\bottomrule
\end{tabular}}
\end{table}

\newpage
\subsection{Functional Roles}

\begin{table}[h]
\centering
\caption{Functional roles, their local observations, principal actions, and primary outputs or state updates.}
\label{tab:agents}
\footnotesize
\rowcolors{2}{white}{gray!8}
\begin{tabularx}{\textwidth}{
>{\raggedright\arraybackslash}p{2.2cm}
>{\raggedright\arraybackslash}p{3.1cm}
>{\raggedright\arraybackslash}p{5.1cm}
>{\raggedright\arraybackslash}X}
\toprule
\textbf{Agent role} &
\textbf{Local observation $\Omega_i(s)$} &
\textbf{Principal action} &
\textbf{Primary output / state update} \\
\midrule

Canonicalization &
Native case and adapter contract &
Validate and normalize the dataset instance; enforce pre-prediction leakage checks &
$B$ including the media manifest; adapter trace \\

Perception &
Canonical media, main claim, and context &
Extract metadata, scene-aware keyframes, OCR, ASR, visual descriptions,
forensic signals, and local reverse matches &
Initial media evidence $\mathcal{E}^{\mathrm{media}}$ \\

Claim decomposition &
$B$ and extracted media evidence &
Produce scoped what, where, when, who, why, and authenticity subclaims &
$\mathcal{C}$ including initial query seeds \\

Memory retrieval &
Subclaim, case context, media evidence, and active long-term memory &
Retrieve compatible active long-term records as planning guidance &
Retrieved view of $\mathcal{M}$ and usage trace \\

Research planning &
Subclaim, current evidence, and retrieved memory &
Select questions, queries, source types, and uncertainty checks &
$\mathcal{R}$ \\

Evidence retrieval &
Research plan, search adapters, and cached evidence &
Search supplied or cached evidence, web, news, fact-check, geolocation,
and online reverse-image sources &
Candidate external evidence $\mathcal{E}^{\mathrm{ext}}_k$ \\

Evidence validation &
Candidate evidence and provenance &
Complete provenance, merge duplicate identifiers, rank evidence,
and construct evidence links &
Normalized $\mathcal{E}$ and $G^E$ \\

Proponent / Opponent &
One subclaim and its selected evidence &
Construct evidence-linked support and attack arguments &
Candidate $\mathcal{G}_k$ \\

Argument verification &
Argument and linked evidence &
Check grounding and evidence--argument consistency; fail closed on missing
or unavailable verification &
$v_a$ and uncertainty codes $\upsilon_a$ in $\mathcal{G}$ \\

A-QBAF reasoning &
Scored arguments for one subclaim &
Construct and propagate $Q_k$; identify high-weight clashes &
$Q_k$, $d_k$, and uncertainty codes \\

Decision / Report &
All $d_k$, dataset contract, and case trace &
Aggregate the final decision and confidence; render structured and readable reports &
$Y$, $\gamma$, $\mathcal{J}$, $\xi$, and $\mathcal{T}$ \\

Contestation &
Human actions, provenance links, and dependency graph &
Route revision to the earliest affected stage and recompute the affected scope &
$\mathcal{H}$, revised downstream state, and before/after trace \\

Reflection / Memory &
Finalized report and eligible post-prediction supervision &
Diagnose outcomes, verify lessons, stage candidates, and consolidate
independent observations &
Staged candidates and updated long-term memory in learning/bootstrap runs \\

\bottomrule
\end{tabularx}
\end{table}

\newpage
\subsection{Tool Configuration}\label{app:tools}
Table~\ref{tab:tool_versions} summarizes the software tools and runtime
configurations used by the SEMV media-processing, forensic, reverse-search, and model-serving components. Tool settings were held fixed across the reported evaluations, with optional components explicitly indicated.

\begin{table}[h]
\centering
\caption{Tool inventory and development versions.}
\label{tab:tool_versions}
\footnotesize
\setlength{\tabcolsep}{5pt}
\renewcommand{\arraystretch}{1.12}
\rowcolors{2}{white}{gray!5}
\begin{tabularx}{\textwidth}{>{\raggedright\arraybackslash}p{2.0cm}>{\raggedright\arraybackslash}p{5cm}>{\raggedright\arraybackslash}X}
\toprule
\textbf{Function} & \textbf{Tool/version} & \textbf{Configuration} \\
\midrule
Metadata & ExifTool 13.25; FFmpeg/FFprobe 7.1.1; Pillow 11.3.0 & embedded metadata and stream inspection \\
Frame extraction & PySceneDetect 0.6.6; FFmpeg 7.1.1 & scene detection, at most $K_f=8$ keyframes \\
OCR & PaddleOCR 3.3.2; PaddlePaddle 3.2.1 & PP-OCRv5 multilingual recognition with language/script-specific model selection\\
ASR & faster-whisper 1.1.1 & base model; language auto-detection \\
Forensics & Pillow 11.3.0; TruFor (optional) &
Error Level Analysis (ELA), noise/blur indicators, border or recapture cues, and editing-metadata checks; optional TruFor threshold $0.50$ \\
Local reverse search & ImageHash 4.3.2; open-clip-torch 2.32.0; FAISS 1.9.0 & pHash Hamming threshold 10; OpenCLIP ViT-B/32; FAISS inner-product threshold $0.84$ \\
Serving & vLLM 0.24.0; PyTorch 2.11.0 & BF16, batch 1, 32$K$-token context \\
\bottomrule
\end{tabularx}
\end{table}

\subsection{Configuration Constants}\label{app:constants}
Table~\ref{tab:constants} summarizes the fixed configuration constants used throughout SEMV. These settings govern media processing, evidence and memory ranking, argument scoring, decision thresholds, and memory verification. They were tuned on the COSMOS validation development subset and were kept unchanged across the reported COSMOS test, MV2026 validation/challenge, and CTR evaluations.

\begin{table}[htbp]
\label{tab:constants}
\centering
\caption{Configuration constants. Values were tuned on the COSMOS validation development subset.}
\label{tab:constants}
\footnotesize
\setlength{\tabcolsep}{4.5pt}
\renewcommand{\arraystretch}{1.08}
\rowcolors{2}{white}{gray!5}
\begin{tabularx}{\textwidth}{p{2.7cm}X}
\toprule
\textbf{Component} & \textbf{Value(s)} \\
\midrule
Media/retrieval &
$K_f=8$, $K_m=5$, top-10 evidence; frame-dedup pHash 4;
reverse-search pHash 10; CLIP 0.84; forensic 0.50 \\

Evidence ranking &
$+0.20$ claim-type, up to $+0.20$ lexical overlap,
$-0.15$ uncertainty \\

Diversity/specificity &
source cap 3; modality cap 2;
$p_a=\clip(|t_a|/280,0.25,1)$ \\

Clash/decision &
clash threshold $0.55$ with margin $\leq0.25$;
decision bands $0.30,0.45,0.55,0.70$ \\

Confidence &
verified 0.65; manipulated/false-context/mostly verified 0.55;
partially verified 0.45; insufficient evidence 0.35 \\

Promotion &
$(\theta_\tau^c,\theta_\tau^x,\theta_\tau^s)$:
episodic $(0.85,1,1)$; failure $(0.70,2,2)$;
semantic $(0.75,3,3)$ \\
\bottomrule
\end{tabularx}
\end{table}

\newpage
\subsection{Stability Evaluation}

Table~\ref{tab:run_stability} summarizes the three-run stability evaluation. Across the fixed COSMOS subset, run-to-run variation is small, with backbone accuracy standard deviations of 0.20--0.42 percentage points. Memory-policy accuracy variation is similarly limited, with standard deviations of 0.20--0.53 percentage points; the largest variation occurs for unverified append-all and the smallest for full verified consolidation.

On the 10-case MV2026 held-out validation set, Nemotron3 and InternVL3.5 retain the same number of correct decisions across all three runs, while Gemma4 and Qwen3.6 vary by at most one case. The sample standard deviation of Evidence-URL F1 ranges from 0.5 to 0.9 points across backbones. Overall, these results indicate limited sensitivity to run-level stochasticity under the evaluated settings. Given the small number of runs and MV2026 cases, however, they should be interpreted as descriptive stability evidence rather than population-level uncertainty estimates.

\begin{table}[h]
\centering
\caption{Three-run stability evaluation. For COSMOS, values are sample standard deviations of accuracy, in percentage points, across three runs on the same fixed stratified 500-case subset. Full-test results reported elsewhere remain single-run point estimates. For MV2026, correct-count ranges are reported because of the small held-out set ($N=10$); $s_{\mathrm{F1}}$ denotes the sample standard deviation of Evidence-URL F1 across the three runs, in F1 points on the 0--100 scale.}
\label{tab:run_stability}
\footnotesize
\setlength{\tabcolsep}{5pt}
\begin{tabularx}{\textwidth}{l l X}
\toprule
\textbf{Evaluation} & \textbf{Model} & \textbf{Observed Variation} \\
\midrule
\multirow{4}{*}{Backbones}
 & Nemotron3   & 0.31 pp \\
 & Gemma4      & 0.42 pp \\
 & InternVL3.5 & 0.31 pp \\
 & Qwen3.6     & 0.20 pp \\
\midrule
\multirow{5}{*}{Memory policy}
 & Off                           & 0.23 pp \\
 & Episodic only                 & 0.23 pp \\
 & Unverified append-all         & 0.53 pp \\
 & Staged, no conflict retention & 0.31 pp \\
 & Full verified consolidation   & 0.20 pp \\
\midrule
\multirow{4}{*}{MV2026}
 & Nemotron3   & 8--8/10 correct; $s_{\mathrm{F1}}=0.6$ \\
 & Gemma4      & 7--8/10 correct; $s_{\mathrm{F1}}=0.9$ \\
 & InternVL3.5 & 8--8/10 correct; $s_{\mathrm{F1}}=0.5$ \\
 & Qwen3.6     & 8--9/10 correct; $s_{\mathrm{F1}}=0.7$ \\
\bottomrule
\end{tabularx}
\end{table}

\newpage
\subsection{Sensitivity to Configuration Constants} \label{subsec:sensitivity}

Sec.~\ref{sec:method} introduces several decision thresholds and hand-set weights. We examine their local sensitivity in the fixed-stratified 500-image COSMOS validation analysis subset. 

All constants are frozen before the sensitivity sweep, and the subset is not used to revise them afterward. We divide the analysis according to how each factor is evaluated.

First, we perturb a factor only during decision-time inference while holding the retrieved evidence and consolidated-memory snapshot fixed (see Table~\ref{tab:sensitivity}a). Some factors, including the decision bands and argument-strength weights, can also affect later memory candidates through the resulting decisions, confidence values, and argument contributions. Their results therefore isolate the conditional decision-time component rather than the complete effect of reconstructing memory under the perturbed values. 

Second, we evaluate constants whose explicit role is to verify, admit, or promote experiences into long-term memory (see Table~\ref{tab:sensitivity}b). For every setting, we rebuild consolidated memory from the same fixed 1,200-case memory subset, freeze the resulting snapshot, and evaluate it on the same 500-case validation analysis subset.

\begin{table}[h] 
\centering 
\caption{Sensitivity to one factor group at a time on the fixed stratified $500$-image COSMOS validation subset with Qwen3.6. The reference configuration obtains $91.8\%$ accuracy. (A) isolates conditional decision-time effects using identical evidence and the reference-built memory snapshot. (B) reconstructs memory from the same pre-test memory-learning partition for each setting. $\Delta$ denotes the difference from the validation reference in percentage points.} 
\label{tab:sensitivity} 
\footnotesize
\setlength{\tabcolsep}{4.5pt} 
\renewcommand{\arraystretch}{1.08} 
\begin{tabularx}{\textwidth}{p{2.35cm}Xcc} 
\toprule 
\textbf{Group} & \textbf{Perturbation} & \textbf{Acc. (\%)} & \textbf{$\Delta$ pp} 
\\ 
\midrule 
\rowcolor{PalePurple!6}\multicolumn{4}{l}{ \textit{(a) Conditional decision-time sensitivity with fixed evidence and reference-built memory}} 
\\ 
\addlinespace[2pt] 
Decision bands & Shift all decision cut-points by $-0.05$ / $+0.05$ & $91.6$ / $92.0$ & $-0.2$ / $+0.2$ \\ Strength ablation & Remove reliability / relevance & $91.2$ / $91.0$ & $-0.6$ / $-0.8$ \\ Strength ablation & Remove source diversity / modality diversity & $91.6$ / $92.0$ & $-0.2$ / $+0.2$ \\ Strength ablation & Remove provenance / specificity & $91.2$ / $91.8$ & $-0.6$ / $0.0$ \\ Strength control & Replace the six hand-set weights with uniform $1/6$ weights & $91.4$ & $-0.4$ \\ \addlinespace[3pt] 
\midrule
\rowcolor{PalePurple!6}\multicolumn{4}{l}{\textit{(b) End-to-end memory-policy sensitivity with memory rebuilt for each setting}} \\ \addlinespace[2pt] Memory promotion & Shift all promotion-confidence thresholds by $-0.05$ / $+0.05$ & $91.6$ / $92.0$ & $-0.2$ / $+0.2$ \\ Memory verification & Set $\theta_v=0.50/0.60/0.70$ & $91.2$ / $91.8$ / $91.6$ & $-0.6$ / $0.0$ / $-0.2$ \\ 
\bottomrule 
\end{tabularx}
\end{table}

\begin{figure}[h]
\centering
\begin{minipage}[t]{0.30\linewidth}
    \vspace{0pt}
    \captionsetup{skip=0pt}
    \caption{\textit{Sensitivity to one factor group at a time on the COSMOS validation sensitivity subset using Qwen3.6.} Decision-time perturbations reuse fixed evidence and the reference-built memory snapshot, whereas memory-policy perturbations rebuild memory from the same pre-test memory-learning partition. Values report the change relative to the $91.8\%$ validation reference, followed by the resulting absolute accuracy in parentheses.}
    \label{fig:configuration_sensitivity}
\end{minipage}
\hfill
\begin{minipage}[t]{0.69\linewidth}
    \vspace{0pt}
    \centering
    \includegraphics[width=\linewidth]{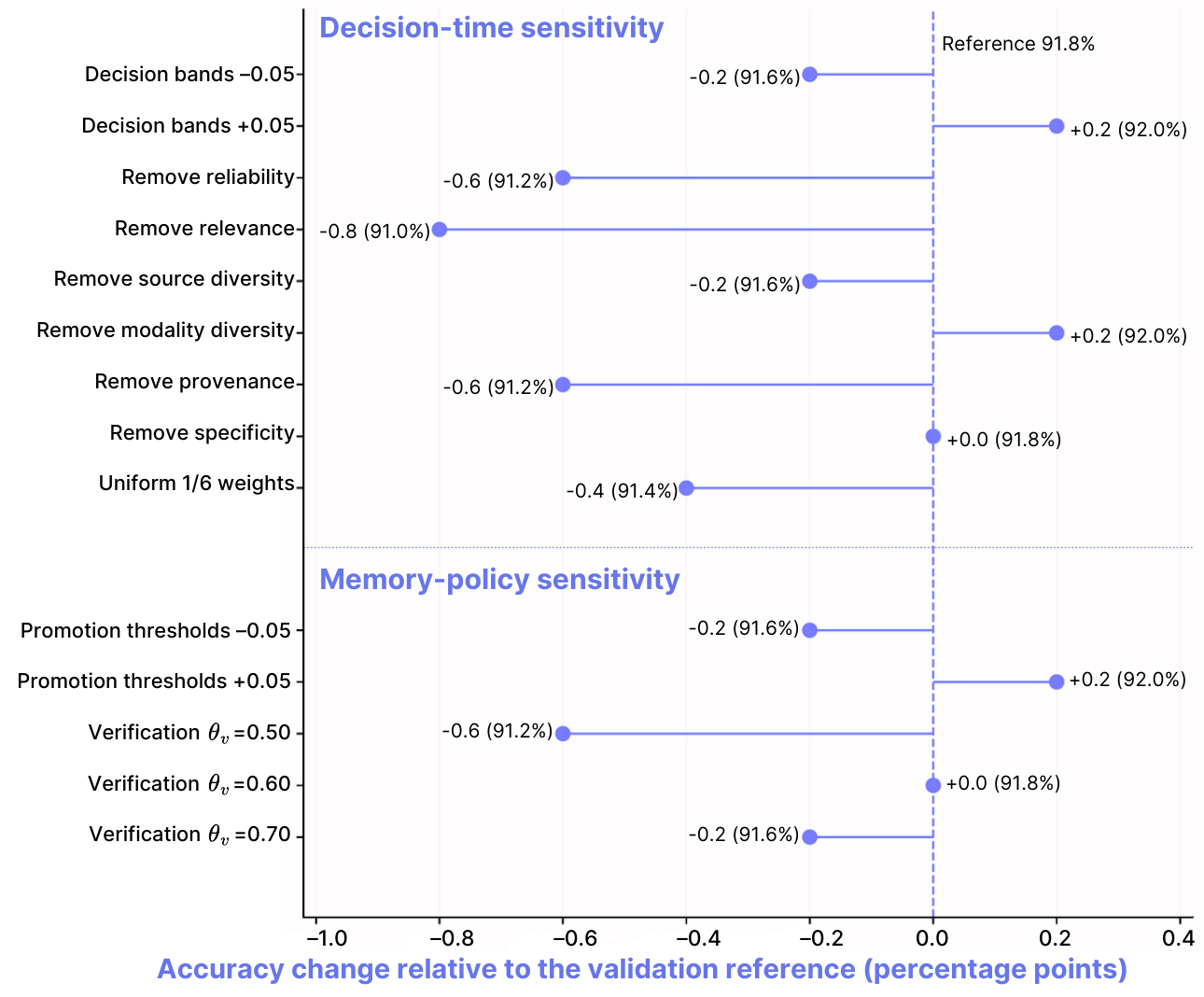}
\end{minipage}
\end{figure}

\paragraph{Results} As summarized in Fig.~\ref{fig:configuration_sensitivity}, 13 exploratory perturbations produce accuracies between $91.0\%$ and $92.0\%$, remaining within $0.8$ percentage points of the $91.8\%$ validation reference. The largest reduction occurs when relevance is removed, followed by reliability and provenance. In contrast, increasing the decision bands by $0.05$, removing modality diversity, and increasing the promotion thresholds each improve validation accuracy by $0.2$ percentage points. Removing specificity leaves accuracy unchanged at the reported precision, while uniform weighting reduces accuracy by $0.4$ percentage points. 

We treat this sweep as a descriptive development analysis and do not interpret the point-estimate differences as evidence that any individual component is statistically necessary. The reference configuration is selected on the separate development pool and frozen before the 500-case analysis subset is examined; sensitivity results are not used to retune it before final test evaluation. Because the analysis varies one factor group at a time, it does not characterize interactions among constants or robustness under larger configuration changes. Within the examined neighborhood, validation accuracy is locally stable.
\end{document}